\newtheorem{theorem}{Theorem}
\newtheorem{corollary}{Corollary}
\newtheorem{lemma}{Lemma}
\newtheorem{proposition}{Proposition}
\newtheorem{definition}{Definition}
\theoremstyle{remark}
\newtheorem{remark}{Remark}
\newtheorem{example}{Example}
\def\F{\mathcal{F}}
\def\G{\mathcal{G}}
\def\CE{\mathcal{E}}
\def\oCE{\overleftarrow{\mathcal{E}}\hspace{-0.2em}}
\def\CF{\mathcal{F}}
\def\CM{\mathcal{M}}
\def\CA{\mathcal{A}}
\def\CR{\mathcal{R}}
\def\fA{\mathfrak{A}}
\def\ofA{\overleftarrow{\mathfrak{A}}}
\def\bA{\mathbb{A}}
\def\bR{\mathbb{R}}
\def\bN{\mathbb{N}}
\def\bQ{\mathbb{Q}}
\def\bP{\mathbb{P}}
\def\bE{\mathbb{E}}
\def\CY{\mathcal{Y}}
\def\bone{\mathbf{1}}
\begin{document}

\title{Uncertainty and filtering of hidden Markov models in discrete time}
\author{Samuel N. Cohen\footnote{Research supported by the Oxford--Man Institute for Quantitative Finance and the Oxford--Nie Financial Data Laboratory. Thanks to Ramon van Handel, Michael Monoyios, Sergey Nadtochiy, Andrew Allan, Gon{\c c}alo Sim{\~ o}es and Robert Elliott for useful conversations.}\\samuel.cohen@maths.ox.ac.uk\\ Mathematical Institute, University of Oxford}
\date{\today}

\maketitle
\begin{abstract}
We consider the problem of filtering an unseen Markov chain from noisy observations, in the presence of uncertainty regarding the parameters of the processes involved. Using the theory of nonlinear expectations, we describe the uncertainty in terms of a penalty function, which can be propagated forward in time in the place of the filter. We also investigate a simple control problem in this context.

Keywords: Filtering, Optimal control, Robustness, nonlinear expectation
MSC: 62M20, 60G35, 93E11
\end{abstract}

\section{Introduction}

Filtering is a common problem in many applications. The essential concept is that there is an unseen Markov process, which influences the state of some observed process, and our task is to approximate the state of the unseen process using a form of Bayes' theorem. Many results have been obtained in this direction, most famously the Kalman filter (Kalman \cite{52}, Kalman and Bucy \cite{53}), which assumes the underlying processes considered are Gaussian, and gives explicit formulae accordingly. Similarly, under the assumption that the underlying process is a finite-state Markov chain, a general formula to calculate the filter can be obtained (the Wonham filter \cite{79}). These results are well known, in both discrete and continuous time (see Bain and Crisan \cite{Bain2009} or Cohen and Elliott \cite[Chapter 21]{Cohen2015} for further general discussion).

In this paper, we shall consider a simple setting in discrete time, where the underlying process is a finite-state Markov chain. Our concern will be to study uncertainty in the dynamics of the underlying processes, in particular its effect on the behaviour of the corresponding filter. That is, we will assume that the observer has only imperfect knowledge of the dynamics of the underlying process and of its relationship with the observation process, and wishes to incorporate this uncertainty in their estimates of the unseen state. We are particularly interested in allowing the level of uncertainty in the filtered state to be endogenous to the filtering problem, arising from the uncertainty in parameter estimates and process dynamics.

We will model this uncertainty in a general manner, using the theory of nonlinear expectations, and shall particularly concern ourselves with a description of uncertainty for which explicit calculations can still be carried out, and which can be motivated by considering statistical estimation of parameters. We then apply this to building a dynamically consistent expectation for random variables based on future states, and to a general control problem with learning under uncertainty.

\subsection{Basic filtering}\label{sec:BasicFilter}

Consider two stochastic processes, $X=\{X_t\}_{t\ge 0}$ and $Y=\{Y_t\}_{t\ge 0}$. Let $\Omega$ be the space of paths of $(X,Y)$ and $\bP$ be a probability measure on $\Omega$. We denote by $\{\CF_t\}_{t\ge 0}$ the (completed) filtration generated by $X$ and $Y$, and $\CY=\{\CY_t\}_{t\ge 0}$ the (completed) filtration generated by $Y$. The key problem of filtering is to determine estimates of $\phi(X_t)$ given $\CY_t$, that is $\bE_\bP[\phi(X_t)|\CY_t]$ where $\phi$ is an arbitrary Borel function. 

Suppose that $X$ is a Markov chain with (possibly time-dependent) transition matrix $A_t^\top$ under $\bP$ (the transpose here saves notational complexity later).  Without loss of generality we can assume that $X$ takes values in the standard basis vectors $\{e_i\}_{i=1}^N$ of $\bR^N$ (where $N$ is the number of states of $X$), and so we can write 
\[X_t = A_t X_{t-1} + M_t\]
where $\bE_\bP[M_{t+1}|\CF_t] = 0$, so $\bE_\bP[X_t|\F_{t-1}] = A_t X_{t-1}$.

We suppose the process $Y$ is multivariate real-valued\footnote{This assumption can easily be relaxed, to allow for $Y$ to take values in an appropriate Polish or Blackwell space. We restrict to the real setting purely for simplicity.}. The law of $Y$ will be allowed to depend on $X$, in particular, the $\bP$-distribution of $Y_t$ given $\{X_{s}\}_{s\leq t}\cup \{Y_{s}\}_{s< t}$ (that is, given all past observations of $X$ and $Y$ and the current state of $X$) is 
\[Y_t \sim c(y;t, X_t)d\mu(y)\]
for $\mu$ a reference measure on $(\bR^d, \mathcal{B}(\bR^d))$.

For simplicity, we shall assume that $Y_0\equiv 0$, so no information is revealed about $X_0$ at time $0$. It is convenient to write $C_t(y)=C(y; t)$ for the diagonal matrix with entries $c(y;t, e_i)$, so that \[ C_t(y) X_t = c(y;t, X_t)X_t.\]  Note that these assumptions, in particular the values of $A$ and $c$, depend on the choice of probability measure $\bP$. Conversely, as our space $\Omega$ is the space of paths of $(X,Y)$, the measure $\bP$ is determined by $A$ and $c$.

As we have assumed $X_t$ takes values in the standard basis in $\bR^N$, the expectation $\bE_\bP[X_{t}|\CY_{t}]$ determines the entire conditional distribution of $X_t$ given $\CY_t$. In this discrete time context, the filtering problem can be solved in a fairly simple manner: Suppose we have already calculated $p_{t-1}:=\bE_\bP[X_{t-1}|\CY_{t-1}]$. Then by linearity and the dynamics of $X$, using the fact
\[\bE_\bP[M_t|\CY_{t-1}] = \bE_\bP[\bE_\bP[M_t|\F_{t-1}]|\CY_{t-1}]=0,\]
we can calculate
\[\bE_\bP[X_{t}|\CY_{t-1}] = \bE_\bP[A_t X_{t-1}+ M_t|\CY_{t-1}] =  A_t p_{t-1}.\]

Bayes' theorem then states that, with probability one,
\[\bP(X_t=e_i|\CY_{t}) = \bP(X_t=e_i|\{Y_{s}\}_{s< t}, Y_t) \propto c(Y_t;t, e_i) \bP(X_t=e_i|\CY_{t-1}),\]
which can be written in a simple matrix form,
\begin{equation}\label{eq:basicfilter}
 p_t \propto C_t(Y_t) A_t p_{t-1}.
\end{equation}
As $p_t$ is a probability vector, normalization of the right hand side determines $p_t$ directly. We call $p_t$ the `filter state' at time $t$. Note that, if we assume the density $c$ is positive, $A_t$ is irreducible and $p_{t-1}$ has all entries positive, then $p_t$ will also have all entries positive.

\begin{definition}
For future use, if $\fA=(A,C(\cdot))$ denotes the $A$ and $C$ matrices described above and $p_0$ is the initial filter state, then we will write $p_t^{\fA,p_0}$ for the filter state at time $t$, that is, the solution to \eqref{eq:basicfilter} (where the observations $Y$ are implicit).
\end{definition}

 In practice, the key problem with implementing these methods is the requirement that we know the underlying transition matrix $A^\top$ and the density $c$. These are generally not known perfectly, but need to be estimated prior to the implementation of the filter. Uncertainty in the choice of these parameters will lead to uncertainty in the estimates of the filtered state, and the aim of this paper is to derive useful representations of that uncertainty. 

As variation in the choice of $A$ and $c$ corresponds to a different choice of measure $\bP$, we see that using an uncertain collection of generators corresponds naturally to uncertainty regarding $\bP$. This type of uncertainty, where the probability measure is not known, is commonly referred to as `Knightian' uncertainty (with reference to Knight \cite{Knight1921}, related ideas are also discussed by Keynes \cite{Keynes1921}). 

Effectively, we wish to consider the propagation of uncertainty in Bayesian updating (as the filter is simply a special case of this). Huber and Ronchetti \cite[p331]{Huber2009} briefly touch on this, however (based on earlier work by Kong) argue that this propagation is computationally infeasible. However, their approach was based on Choquet integrals, rather than nonlinear expectations. In the coming sections, we shall see how the structure of nonlinear expectations allows us to derive comparatively simple rules for updating.

\begin{remark}
While we will present our theory in the context where $X$ is a finite state Markov chain, our approach does not depend in any significant way on this assumption. In particular, it would be equally valid \emph{mutatis mutandis} when we supposed that $X$ followed the dynamics of the Kalman filter, and our uncertainty was on the coefficients of the filter. We specialize to the Markov chain case purely for the sake of concreteness.
\end{remark}

\section{Conditionally Markov Measures}
In order to incorporate learning in our nonlinear expectations and filtering, it is useful to extend slightly from the family of measures previously described. In particular, we wish to allow the dynamics to depend on the past observations, while preserving enough Markov structure to enable filtering. We write $\CM_1$ for the space of probability measures equivalent to a reference measure $\bP$. The following two classes of probability measures will be of interest.

\begin{definition}
 Let $\CM_M\subset \CM_1$ denote the probability measures under which 
\begin{itemize}
 \item $X$ is a Markov chain, that is, for all $t$, $X_{t+1}$ is independent of $\F_t$ given $X_t$,
 \item $\{Y_s\}_{s\ge {t+1}}$ is independent of $\F_t$ given $X_{t+1}$,
 \item both $X$ and $Y$ are time homogeneous, that is, the conditional distributions of $X_{t+1}|X_{t}$ and $Y_t|X_t$ do not depend on $t$.
\end{itemize}
\end{definition}

To extend this slightly, we can allow our processes to depend on the past of the observed process $Y$.
\begin{definition}
 Let $\CM_{M|\CY}\subset \CM_1$ denote the probability measures under which 
\begin{itemize}
 \item $X$ is a conditional Markov chain, that is, for all $t$, $X_{t+1}$ is independent of $\F_t$ given $X_t$ and $\{Y_s\}_{s\leq t}$, and
 \item $\{Y_s\}_{s\ge {t+1}}$ is independent of $\F_t$ given $\{X_{t+1}\}\cup\{Y_s\}_{s\leq t}$.
\end{itemize}
\end{definition}

We should note that, if we consider a measure in $\CM_{M|\CY}$, there is a natural notion of the generators $A$ and $C$. In particular, $\CM_M$ corresponds to those measures under which the generators $A$ and $C$ are constant, while $\CM_{M|\CY}$ corresponds to those measures under which the generators $A$ and $C$ are deterministic functions of time and $\{Y_s\}_{s\leq t}$. 

\begin{definition}We shall write $\mathbb{A}$ for the space in which the generator takes values\footnote{This space can be thought of as the product of the space of transition matrices and the space of diagonal matrix-valued functions, where each diagonal element is a probability density on $\mathbb{R}^d$.} and write $\mathcal{A}_{\CY}$ for the collection of generators associated with $\CM_{M|\CY}$, that is, $\CY$-adapted processes taking values in $\mathbb{A}$. 
\end{definition}

For each $t$, these generators determine the measure on $\F_t$ given $\F_{t-1}$, and (together with the distribution of $X_0$) this determines the measure at all times. It is straightforward to verify that our filtering equations hold for all measures in $\CM_{M|\CY}$, with the appropriate modification of the generators. 

\begin{definition}
For a measure $\bQ\in \CM_{M|\CY}$, we shall write $\fA^\bQ = \big(A^\bQ,C^\bQ(\cdot)\big)$ for the generator of $(X,Y)$ under $\bQ$, recalling that $C^\bQ_t(y) = \mathrm{diag}(\{c^\bQ_t(y; e_i)\}_{i=1}^N)$, and that $A^\bQ_t$ and $C^\bQ_t$ are now allowed to depend on $\{Y_s\}_{s<t}$.  For notational convenience, we shall typically not write the dependence on $\{Y_s\}_{s<t}$ explicitly.

Similarly, for $\fA\in \CA_{\CY}$ and $p_0$ a probability vector in $\bR^N$, we shall write $\bQ^{\fA, p_0}$ for the measure with generator $\fA$ and $X_0\sim p_0$ under $\bQ$.
\end{definition}

In our setting, our fundamental problem is that we do not know what measure is `true', and so work instead under a family of measures. In general, measure changes can be described as follows.

\begin{proposition}\label{prop:LikeFull}
Let $\bar\bP$ be the reference measure (which does not have to be a probability measure), under which $X$ is a sequence of iid uniform random variables from the basis vectors $\{e_1,...e_N\}\subset\bR^N$ and $\{Y_t\}_{t\ge 0}$ is independent of $X$, with iid distribution $Y_t\sim d\mu$. The measure $\mathbb{Q}\in \CM_{M|\CY}$ where $X$ has generator $\{A^\mathbb{Q}_t\}_{t\ge 0}$, where $Y_t\sim c^\mathbb{Q}_t(y, X_t)d\mu(y)$ and $X_0\sim p^\mathbb{Q}_0$ has Radon--Nikodym derivative (or likelihood)
\[
\frac{d\mathbb{Q}}{d\bar\bP}\Big|_{\F_T} = (X_0^\top p_0^\bQ) N\prod_{t=1}^T \Big(\big(X_t^\top A_{t-1}^{\bQ} X_{t-1}\big)c^\bQ_t(Y_t; X_t)\Big).
\]
\end{proposition}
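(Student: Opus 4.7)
The strategy is a direct computation: write the joint density of the trajectory $(X_0, X_1, \ldots, X_T, Y_1, \ldots, Y_T)$ under $\bQ$ and under $\bar\bP$ with respect to a common dominating measure, then take their ratio. A convenient choice is the product of counting measures on $\{e_1, \ldots, e_N\}$ for each $X_t$ together with $\mu^{\otimes T}$ for $(Y_1, \ldots, Y_T)$ (recalling that $Y_0 \equiv 0$). Under $\bar\bP$, the joint density follows immediately from the prescribed marginals and the independence of $X$ and $Y$, yielding a deterministic constant on the trajectory space; the normalisation of this constant against the $\bQ$-side is what accounts for the explicit factor $N$ in the claimed expression.

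Under $\bQ$, I factor the joint density by ordering the coordinates chronologically as $X_0, X_1, Y_1, \ldots, X_T, Y_T$ and iteratively applying the chain rule for conditional densities, invoking the two defining properties of $\CM_{M|\CY}$ at each step. The initial law $X_0 \sim p_0^\bQ$ contributes the factor $X_0^\top p_0^\bQ$. The conditional Markov property for $X$ reduces $\bQ(X_t \in \cdot \mid \F_{t-1})$ to $\bQ(X_t \in \cdot \mid X_{t-1}, \{Y_s\}_{s \le t-1})$; combined with the $\CY$-adaptedness of $A^\bQ$ this gives the transition factor $X_t^\top A_{t-1}^\bQ X_{t-1}$. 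The second defining property of $\CM_{M|\CY}$ collapses the conditional law of $Y_t$ to the prescribed density $c_t^\bQ(Y_t; X_t)\,d\mu$. Multiplying these factors over $t = 1, \ldots, T$ produces the product appearing on the right-hand side of the claim.

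Taking the ratio of the two densities then yields the proposition directly. As a sanity check, one verifies that the right-hand side has unit $\bar\bP$-expectation by integrating out $Y_T$ first (using that $c_T^\bQ(\cdot; x)$ is a probability density in $y$), then summing over $X_T$ (using that the columns of $A_{T-1}^\bQ$ sum to one, since it is the transpose of a stochastic matrix), and continuing backwards through each time step down to the normalisation $\bone^\top p_0^\bQ = 1$.

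The main obstacle is the second step: one must apply the two axioms of $\CM_{M|\CY}$ in the correct order so as to peel off precisely one time step at a time, and carefully track the permissible past-observation dependence of $A_{t-1}^\bQ$ and $c_t^\bQ$ against what is being conditioned upon at each stage. Once this bookkeeping is set up, the rest of the argument is essentially mechanical.
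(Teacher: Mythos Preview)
The paper does not actually supply a proof of this proposition; it is stated as a standard fact and then used. Your approach---writing the joint density of $(X_0,\ldots,X_T,Y_1,\ldots,Y_T)$ under each measure with respect to counting measure times $\mu^{\otimes T}$, factoring the $\bQ$-density via the chain rule using the two defining properties of $\CM_{M|\CY}$, and taking the ratio---is exactly the standard argument one would expect, and it is correct. The backward integration check for unit $\bar\bP$-expectation is a nice confirmation.

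One small remark: your sentence about the factor $N$ is a little imprecise. If $\bar\bP$ really assigns probability $1/N$ to each basis vector at every time, the ratio would carry $N^{T+1}$ rather than a single $N$; the paper's parenthetical ``which does not have to be a probability measure'' suggests a convention in which the $X_t$-marginals are counting measure (density $1$) rather than uniform probability, and the stated formula's normalisation is not fully consistent with either reading. This is a wrinkle in the paper's statement, not in your method, but it would be worth flagging explicitly rather than glossing over.
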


The above proposition gives a Radon--Nikodym derivative adapted to the full filtration $\{\F_t\}_{t\ge 0}$. In practice, it is also useful to consider the corresponding Radon--Nikodym derivative adapted to the observation filtration $\{\CY_t\}_{t\ge 0}$. As this filtration is generated by the process $Y$, it is enough to multiply together the conditional distributions of $Y_t|\CY_{t-1}$, leading to the following convenient representation. For notational simplicity, we write 
\[c_t(y; p):=\sum_i p_i c_t(y; e_i).\]

\begin{proposition}\label{prop:LikePart}
For $\mathbb{Q}$ as in Proposition \ref{prop:LikePart}, the Radon--Nikodym derivative restricted to $\CY_T$ is given by 
\[
\frac{d\mathbb{Q}}{d{\bar\bP}}\Big|_{\CY_T} = \prod_{t=1}^T c^\bQ_t(Y_t; A^\bQ p_{t-1}^{\fA^\bQ, p^\bQ_0})
\]
where we recall that $p_t^{\fA^\bQ, p^\bQ_0}$ is the solution to the filtering problem in the measure $\bQ\in \CM_{M|\CY}$, as determined by \eqref{eq:basicfilter} (and so includes further dependence on $\{Y_{s}\}_{s <t}$).
\end{proposition}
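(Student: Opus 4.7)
The plan is to compute $d\bQ/d\bar\bP|_{\CY_T}$ as a telescoping product of one-step conditional densities of $Y_t$ given $\CY_{t-1}$, and then to identify each factor using the filtering recursion \eqref{eq:basicfilter}. Concretely, since both measures are equivalent on $\CY_T$, one has
\[
\frac{d\bQ}{d\bar\bP}\bigg|_{\CY_T} = \prod_{t=1}^T \frac{d\bQ(Y_t\in\cdot\mid\CY_{t-1})}{d\bar\bP(Y_t\in\cdot\mid\CY_{t-1})}(Y_t),
\]
and under $\bar\bP$ the $Y_t$ are iid with law $\mu$ and independent of everything in $\CY_{t-1}$, so the denominator is simply the density of $\mu$ with respect to itself (i.e.\ $1$) evaluated at $Y_t$.

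Next I would compute the numerator. Under $\bQ\in\CM_{M|\CY}$, conditional on $\F_{t-1}\vee\sigma(X_t)$ the observation $Y_t$ has density $c^\bQ_t(y;X_t)$ with respect to $\mu$, and this density depends only on $X_t$ and $\{Y_s\}_{s<t}$. Consequently
\[
\frac{d\bQ(Y_t\in\cdot\mid\CY_{t-1})}{d\mu}(y) \;=\; \sum_{i=1}^N c^\bQ_t(y;e_i)\,\bQ(X_t=e_i\mid\CY_{t-1}).
\]
The vector with components $\bQ(X_t=e_i\mid\CY_{t-1})$ is exactly the one-step predicted filter, which by the recursion derived in Section~\ref{sec:BasicFilter} and the definition of $p_t^{\fA^\bQ,p^\bQ_0}$ equals $A^\bQ_t\, p_{t-1}^{\fA^\bQ,p^\bQ_0}$. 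Plugging this in and invoking the convention $c_t(y;p)=\sum_i p_i c_t(y;e_i)$ gives
\[
\frac{d\bQ(Y_t\in\cdot\mid\CY_{t-1})}{d\mu}(y) = c^\bQ_t\bigl(y;\, A^\bQ_t p_{t-1}^{\fA^\bQ,p^\bQ_0}\bigr),
\]
and taking the product over $t=1,\dots,T$ yields the stated formula.

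As a cross-check (and alternative route) one could simply apply the tower property to Proposition~\ref{prop:LikeFull}, writing $d\bQ/d\bar\bP|_{\CY_T} = \bE_{\bar\bP}[d\bQ/d\bar\bP|_{\F_T}\mid\CY_T]$, and then evaluate this conditional expectation by summing out $X_0,\dots,X_T$ against the uniform law; the recursive structure of that sum reproduces the filtering recursion and the same product. The only real subtlety is step~(2) above: one must be careful that $A^\bQ_t$ and $c^\bQ_t$ are $\CY_{t-1}$-measurable (this is built into $\CA_\CY$), so the $\CY_{t-1}$-conditional distribution of $X_t$ is indeed obtained by applying the (now $\CY_{t-1}$-measurable) matrix $A^\bQ_t$ to the previous filter $p_{t-1}^{\fA^\bQ,p^\bQ_0}$, rather than by a more complicated object involving the conditional law of the coefficients themselves. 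Everything else is bookkeeping.
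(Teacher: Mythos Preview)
Your proposal is correct and follows exactly the route the paper indicates: the sentence preceding the proposition states that ``it is enough to multiply together the conditional distributions of $Y_t|\CY_{t-1}$'', which is precisely your telescoping product of one-step predictive densities, with the predicted filter $A^\bQ_t p_{t-1}^{\fA^\bQ,p_0^\bQ}$ appearing via the recursion of Section~\ref{sec:BasicFilter}. The paper offers no further detail, so your write-up in fact supplies the argument rather than merely matching it.
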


\section{Nonlinear Expectations}
 In this section we introduce the concepts of nonlinear expectations and convex risk measures, and discuss their connection with penalty functions on the space of measures. These objects provide a technical foundation with which to model the presence of uncertainty in a random setting. This theory is explored in some detail in F\"ollmer and Schied \cite{Follmer2002}. Other key works which have used or contributed to this theory, in no particular order, are Hansen and Sargent \cite{Hansen2008} (see also \cite{Hansen2005, Hansen2007} for work related to what we present here), Huber and Ronchetti \cite{Huber2009}, Peng \cite{Peng2010}, El Karoui, Peng and Quenez \cite{El1997}, Delbaen, Peng and Rosazza Gianin \cite{Delbaen2008}, Duffie and Epstein \cite{Duffie1992}, Rockafellar, Uryasev and Zabarankin \cite{Rockafellar2006},  Riedel \cite{Riedel2004} and Epstein and Schneider \cite{Epstein2003}. We base our terminology on that used in \cite{Follmer2002} and \cite{Delbaen2008}.

 We here present, without proof, the key details of this theory as needed for our analysis.  
 
 \begin{definition}\label{defn:nonlinearexpectation}
  For a $\sigma$-algebra $\G$ on $\Omega$,  let $L^\infty(\G)$ denote the space of essentially bounded $\G$-measurable random variables. A nonlinear expectation on $L^\infty(\G)$ is a mapping 
 \[\CE:L^\infty(\G) \to \bR\]
 satisfying the assumptions
 \begin{itemize}
  \item Strict Monotonicity: for any $\xi_1, \xi_2\in L^\infty(\G)$, if $\xi_1\geq \xi_2$ a.s. then $\CE(\xi_1) \geq \CE(\xi_2)$, and if in addition $\CE(\xi_1)=\CE(\xi_2)$ then $\xi_1=\xi_2$ a.s.,
  \item Constant triviality: for any constant $k$, $\CE(k)=k$,
  \item Translation equivariance: for any $k\in\bR$, $\xi\in L^\infty(\G)$, $\CE(\xi+k)= \CE(\xi)+k$.
 \end{itemize}
 A `convex' expectation in addition satisfies
 \begin{itemize}
  \item Convexity: for any $\lambda\in [0,1]$, $\xi_1, \xi_2\in L^\infty(\G)$, 
 \[\CE(\lambda \xi_1+ (1-\lambda) \xi_2) \leq \lambda \CE(\xi_1)+ (1-\lambda) \CE(\xi_2).\]
 \end{itemize}
 
 If $\CE$ is a convex expectation, then the operator defined by $\rho(\xi) = \CE(-\xi)$ is called a \emph{convex risk measure}. A particularly nice class of convex expectations is those which satisfy
 \begin{itemize}
  \item Lower semicontinuity: For a sequence $\{\xi_n \}_{n\in\bN}$ with $\xi_n \uparrow \xi$ pointwise (and $\xi\in L^\infty(\G)$), $\CE(\xi_n) \uparrow \CE(\xi)$.
 \end{itemize}
 \end{definition}
 
 The following theorem (which was expressed in the language of risk measures) is due to F\"ollmer and Schied \cite{Follmer2002a} and Frittelli and Rosazza Gianin \cite{Frittelli2002}.
 \begin{theorem}\label{thm:penaltyexists}
Suppose $\CE$ is a lower semicontinuous convex expectation. Then there exists a  `penalty' function $\CR: \CM_1\to [0,\infty]$ such that 
 \[\CE(\xi) = \sup_{\bQ\in \CM_1} \big\{\bE_\bQ[\xi] -\CR(\bQ)\big\}.\]
Provided $\CR(\bQ)<\infty$ for some $\bQ$ equivalent to $\bP$, we can restrict our attention to measures in $\CM_1$ equivalent to $\bP$ without loss of generality.
 \end{theorem}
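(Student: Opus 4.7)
The plan is to apply the Fenchel--Moreau biduality theorem to $\CE$, regarded as a convex, weak$^*$-lower semicontinuous functional on $L^\infty(\G)$ paired with $L^1(\G,\bP)$. First I would translate the assumptions into the language of the convex risk measure $\rho(\xi):=\CE(-\xi)$: convexity, decreasing monotonicity, translation equivariance $\rho(\xi+k)=\rho(\xi)-k$, normalisation $\rho(0)=0$, and the Fatou property $\xi_n\downarrow\xi\Rightarrow\rho(\xi_n)\uparrow\rho(\xi)$ (equivalent to the stated lower semicontinuity applied to $-\xi_n\uparrow-\xi$).

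Second, I would upgrade the Fatou property to genuine $\sigma(L^\infty,L^1)$-lower semicontinuity. The standard route uses the acceptance set $\fE:=\{\xi\in L^\infty(\G):\rho(\xi)\le 0\}$: convexity, monotonicity, and Fatou imply that for each $r>0$ the slice $\fE\cap\{\|\xi\|_\infty\le r\}$ is closed in $L^1(\bP)$ (by dominated convergence applied to any $L^1$-convergent sequence, passing to an a.s.-convergent subsequence). The Krein--\v{S}mulian theorem then forces $\fE$ itself to be weak$^*$-closed. Applying the same argument to the translated sets $\{\rho\le c\}=\fE-c$ shows every sublevel set of $\rho$ is weak$^*$-closed, and hence $\CE$ is weak$^*$-lower semicontinuous.

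Third, Fenchel--Moreau on the locally convex space $(L^\infty(\G),\sigma(L^\infty,L^1))$ yields
\[
\CE(\xi)=\sup_{Z\in L^1(\bP)}\Big\{\bE_\bP[\xi Z]-\CE^*(Z)\Big\},\qquad \CE^*(Z):=\sup_{\xi\in L^\infty(\G)}\Big\{\bE_\bP[\xi Z]-\CE(\xi)\Big\}.
\]
It remains to identify the contributing $Z$. Testing with constants $\xi=k$ and using $\CE(k)=k$ gives $\CE^*(Z)\ge k(\bE_\bP[Z]-1)$ for every $k\in\bR$, forcing $\bE_\bP[Z]=1$ whenever $\CE^*(Z)<\infty$. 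Testing with $\xi=-n\bone_A$ for $A\subset\{Z<0\}$ and using monotonicity gives $\CE^*(Z)\ge n\bE_\bP[|Z|\bone_A]-\CE(0)\to\infty$ unless $Z\ge 0$ a.s. Thus the effective domain of $\CE^*$ consists of densities of probability measures $\bQ\ll\bP$, and setting $\CR(\bQ):=\CE^*(d\bQ/d\bP)$ yields the representation.

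For the final clause, if $\CR(\bQ_0)<\infty$ for some $\bQ_0\sim\bP$, then for any $\bQ\ll\bP$ and $\lambda\in(0,1)$ the mixture $\bQ_\lambda:=\lambda\bQ+(1-\lambda)\bQ_0$ is equivalent to $\bP$; convexity of $\CR$ (inherited from $\CE$) gives $\CR(\bQ_\lambda)\le \lambda\CR(\bQ)+(1-\lambda)\CR(\bQ_0)$, so $\bE_{\bQ_\lambda}[\xi]-\CR(\bQ_\lambda)\to \bE_\bQ[\xi]-\CR(\bQ)$ as $\lambda\uparrow 1$ and nothing is lost by restricting the sup to $\CM_1$. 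The main obstacle is the upgrade from the monotone-sequence Fatou condition to full weak$^*$-lower semicontinuity in the second step: without this one only obtains a representation on the norm-dual $(L^\infty)^*$ of finitely additive set functions, and replacing these by countably additive probability measures would require separate work. The Krein--\v{S}mulian argument applied slice-by-slice is what keeps the dual representation inside $L^1$-densities cleanly.
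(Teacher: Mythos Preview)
Your argument is correct and is essentially the standard F\"ollmer--Schied/Frittelli--Rosazza Gianin proof: Fatou is upgraded to $\sigma(L^\infty,L^1)$-lower semicontinuity via Krein--\v{S}mulian on ball-slices of the acceptance set, Fenchel--Moreau gives the dual representation, and translation equivariance plus monotonicity pin the effective domain of $\CE^*$ down to probability densities. The approximation of absolutely continuous measures by equivalent ones in the last paragraph is also the usual convex-combination trick.

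There is nothing to compare, however, because the paper does not prove this theorem at all. It is quoted from the literature, with the preceding sentence in the paper stating explicitly that the results of that section are presented ``without proof'' and attributing the theorem to \cite{Follmer2002a} and \cite{Frittelli2002}. Your write-up simply supplies the omitted argument, and it does so along the same lines as those references.
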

 
 \begin{remark}
  This result gives some intuition as to how a convex expectation can model `Knightian' uncertainty. One considers all the possible probability measures on the space, and then selects the maximal expectation among all measures, penalizing each measure depending on how plausible it is considered. As convexity of $\CE$ is a natural requirement of an `uncertainty averse' assessment of outcomes, Theorem \ref{thm:penaltyexists} shows that this is the only way to construct an `expectation' $\CE$ which penalizes uncertainty, while preserving monotonicity, translation equivariance and constant triviality.
\end{remark}
 
\subsection{DR-expectations}
From the discussion above, it is apparent that we can focus our attention on calculating the penalty function $\CR$, rather than the nonlinear expectation directly. This penalty function is meant to encode how `unreasonable' a probability measure $\bQ$ is as a model for our outcomes. So far, we have assumed that the penalty did not depend on time or on observations. By relaxing this assumption, we can incorporate learning of which models are `good' in our framework. 

In \cite{Cohen2016}, we have considered a framework which links the choice of the penalty function to statistical estimation of a model. The key idea of \cite{Cohen2016} is to use the negative log-likelihood function for this purpose, where the likelihood is taken against an arbitrary reference measure, and evaluated using the observed data. This directly uses the statistical information from observations in the quantification of uncertainty. 

In this paper, we shall make a slight extension of this idea, to explicitly incorporate prior beliefs. In particular, we shall replace the log-likelihood with the log-posterior density, which in turn gives an additional term in the penalty. In order to be precise, we now give a formal definition of the likelihood, which is sufficient for our purposes.

\begin{remark}
 In what follows, we will be variously wishing to \emph{restrict} a measure $\mathbb{Q}$ to a $\sigma$-algebra, and to \emph{condition} a measure on a $\sigma$-algebra. To prevent notational confusion, we shall write $\mathbb{Q}\|_\mathcal{F}$ for the restriction of $\mathbb{Q}$ to $\mathcal{F}$, and $\mathbb{Q}|_\mathcal{F}$ for $\mathbb{Q}$ conditioned on $\mathcal{F}$.
\end{remark}

\begin{definition}


Let $\mathcal{Q}\subseteq\CM_1$ be a set of models under consideration (for example, a parametric set of distributions). For observations $\mathbf{y}$ taking values in $\bR^N$, we define the likelihood to be a fixed map $L^{\mathrm{obs}}:\mathcal{Q}\times \bR^N \to \bR$, measurable with respect to its second argument, such that  $\omega \mapsto L^\mathrm{obs}(\bQ|\mathbf{y}(\omega))$ is a version of the Radon--Nikodym derivative $d\bQ\|_{\sigma(\mathbf{y})} / d\bar\bP\|_{\sigma(\mathbf{y})}$.

Inspired by a `Bayesian' approach, we augment this by the addition of a prior distribution over $\mathcal{Q}$. Suppose a (possibly improper\footnote{As we have not specified a reference measure over $\mathcal{Q}$, we have not defined the prior density as a Radon--Nikodym derivative, and cannot integrate it over the class of models. Therefore, we do not require it to `integrate to $1$', that is, we have an improper prior.  This has no significant impact in what follows, as \eqref{eq:divergence} normalizes away the effect of the reference distribution.}) prior with density of the form $\exp(-\gamma(\bQ))$ is given, then we define the posterior relative density
\[L(\bQ|\mathbf{y}) = L^{\mathrm{obs}}(\bQ|\mathbf{y})\exp(-\gamma(\bQ)).\]

  We then define the ``$\mathcal{Q}|\mathbf{y}$-divergence'' to be the negative log-likelihood ratio (or log-posterior relative density)
\begin{equation}\label{eq:divergence}\alpha_{\mathcal{Q}|\mathbf{y}}(\bQ):= -\log\big(L(\bQ|\mathbf{y})\big) + \sup_{\tilde \bQ\in \mathcal{Q}}\Big\{\log\big(L(\tilde \bQ|\mathbf{y})\big)\Big\}.
\end{equation}
\end{definition}
\begin{remark}
 The right hand side of \eqref{eq:divergence} is well defined whether or not a maximum a posteriori estimator\footnote{Recall that a $\mathcal{Q}$-MAP (maximum a posteriori estimator) is a map $\mathbf{y}\to \hat \bQ\in \mathcal{Q}$ such that $L(\hat \bQ|\mathbf{y}) \geq L(\bQ|\mathbf{y})$ for all $\bQ\in\mathcal{Q}$. }  exists. Given a $\mathcal{Q}$-MAP $\hat \bQ$, we would have the simpler representation 
\[\alpha_{\mathcal{Q}|\mathbf{y}}(\bQ):= -\log\Big(\frac{L(\bQ|\mathbf{y})}{L(\hat \bQ|\mathbf{y})}\Big).\]
\end{remark}

\begin{definition}\label{defn:DRexp}
For fixed observations $\mathbf{y}_t=(Y_1,Y_2,..., Y_t)$, for an uncertainty aversion parameter $k>0$ and exponent $k' \in[1,\infty]$, we define the convex expectation
\begin{equation}\label{eq:DRexpDefn}
 \mathcal{E}_{\mathcal{Q}|\mathbf{y}_t}^{k,k'}(\xi):= \sup_{\bQ\in \mathcal{Q}}\Big\{\bE_\bQ[\xi|\mathbf{y}_t] -\Big(\frac{1}{k}\alpha_{\mathcal{Q}|\mathbf{y}_t}(\bQ)\Big)^{k'}\Big\},
\end{equation}
where we adopt the convention $x^\infty = 0$ for $x\in[0,1]$ and $+\infty$ otherwise.

We call $\mathcal{E}_{\mathcal{Q}|\mathbf{y}_t}^{k,k'}$ the ``DR-expectation\footnote{DR refers either to divergence-robust or data-driven robust.}'' (with parameter $k,k'$). We may omit to write $k, k'$ for notational simplicity. 
\end{definition}

\begin{remark}
 In the cases of interest for this paper, we shall assume that $\mathcal{Q}$ is parameterized by some finite-dimensional real value, such that the divergence and conditional expectations given $\mathbf{y}_t$ are continuous with respect to this parameterization, and are Borel measurable with respect to $\mathbf{y}_t$. This means that the measure theoretic concerns which arise from our definitions in terms of the Radon--Nikodym derivative and taking the supremum will not cause difficulty, in particular, the DR-expectation defined in \eqref{eq:DRexpDefn} is guaranteed to be a Borel measurable function of $\mathbf{y}_t$ for every $\xi$. (This follows from Filippov's implicit function theorem.)
\end{remark}

\begin{remark}\label{rem:basicDRapplication}
 In theory, we could now apply the DR-expectation framework to a filtering context as follows: Take a collection of models $\mathcal{Q} \subseteq \CM_{M|\CY}$. For a random variable $\xi$, and for each measure $\mathbb{Q}\in\mathcal{Q}$, compute $E_\mathbb{Q}[\xi|\mathbf{y}_t]$ and $\alpha_{\mathcal{Q}|\mathbf{y}_t}(\bQ)$. Taking a supremum as in \eqref{eq:DRexpDefn}, we obtain the DR-expectation. However, this is generally not computationally tractable in this form.
\end{remark}

\begin{lemma}\label{lem:horizonirrelevant}
Let $\{\CF_t\}_{t\ge 0}$ be a filtration such that $Y$ is adapted. For $\CF_t$-measurable random variables, the choice of horizon $T\geq t$ in the definition of the penalty function $\alpha$ is irrelevant. That is, for $\CF_t$-measurable $\xi$ and any $s\ge t$,
\[ \mathcal{E}_{\mathcal{Q}|\mathbf{y}_t}(\xi) = \sup_{\bQ\in \mathcal{Q}}\Big\{\bE_\bQ[\xi|\mathbf{y}_t] -\Big(\frac{1}{k}\alpha_{\mathcal{Q}|\mathbf{y}_t}(\bQ\|_{\F_s})\Big)^{k'}\Big\},
\]
where $\alpha_{\mathcal{Q}|\mathbf{y}_t}(\bQ\|_{\F_s})$ is defined as above, in terms of the restricted measure $\bQ\|_{\F_s}$.
\end{lemma}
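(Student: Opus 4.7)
The plan is to exploit two structural features: (i) for $\F_t$-measurable $\xi$, the conditional expectation $\bE_\bQ[\xi|\mathbf{y}_t]$ depends on $\bQ$ only through its restriction to $\F_t$, and (ii) the likelihood $d\bQ\|_{\F_s}/d\bar\bP\|_{\F_s}$ of Proposition \ref{prop:LikeFull} factorizes multiplicatively along the time axis. Together, these should show that any extra information contributed to $\alpha$ by extending the horizon from $t$ to $s$ can be optimized away inside the supremum, without affecting the conditional-expectation term.

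The first concrete step is to write, for any $\bQ\in\mathcal{Q}$ and any $s\geq t$,
\[
\log L(\bQ\|_{\F_s}|\mathbf{y}_t) = \log L(\bQ\|_{\F_t}|\mathbf{y}_t) + \log R_{t,s}(\bQ),
\]
where $R_{t,s}(\bQ)$ collects the factors in Proposition \ref{prop:LikeFull} indexed by $u\in(t,s]$ and therefore depends on $\bQ$ only through the $(t,s]$-segment of $\fA^\bQ$. Because $\CA_\CY$ admits independent specification of the generator at each time step, I expect the supremum defining $\alpha$ to split accordingly:
\[
\sup_{\tilde\bQ\in\mathcal{Q}}\log L(\tilde\bQ\|_{\F_s}|\mathbf{y}_t) = \sup_{\tilde\bQ\in\mathcal{Q}}\log L(\tilde\bQ\|_{\F_t}|\mathbf{y}_t) + \sup_{\tilde\bQ\in\mathcal{Q}}\log R_{t,s}(\tilde\bQ),
\]
modulo a corresponding factorization of the prior term $\gamma$. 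Subtracting then yields
\[
\alpha_{\mathcal{Q}|\mathbf{y}_t}(\bQ\|_{\F_s}) = \alpha_{\mathcal{Q}|\mathbf{y}_t}(\bQ\|_{\F_t}) + \delta_{t,s}(\bQ),
\]
where $\delta_{t,s}(\bQ)\geq 0$ depends only on the $(t,s]$-segment of $\fA^\bQ$ and equals zero when that segment matches the MAP choice.

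Inserting this decomposition into the supremum in the DR-expectation, I would take the supremum over the $(t,s]$-segment first. Since $\bE_\bQ[\xi|\mathbf{y}_t]$ is unaffected by this segment (as $\xi$ is $\F_t$-measurable), and since $x\mapsto x^{k'}$ is monotone on $[0,\infty)$, the inner supremum simply drives $\delta_{t,s}$ to $0$, yielding
\[
\sup_{\bQ\in\mathcal{Q}}\Big\{\bE_\bQ[\xi|\mathbf{y}_t] - \Big(\tfrac{1}{k}\alpha_{\mathcal{Q}|\mathbf{y}_t}(\bQ\|_{\F_s})\Big)^{k'}\Big\} = \sup_{\bQ\in\mathcal{Q}}\Big\{\bE_\bQ[\xi|\mathbf{y}_t] - \Big(\tfrac{1}{k}\alpha_{\mathcal{Q}|\mathbf{y}_t}(\bQ\|_{\F_t})\Big)^{k'}\Big\}.
\]
A final reduction from the $\F_t$-horizon version to the default $\alpha_{\mathcal{Q}|\mathbf{y}_t}(\bQ)$ (which uses the likelihood on $\sigma(\mathbf{y}_t)\subseteq\F_t$) proceeds analogously: Propositions \ref{prop:LikeFull} and \ref{prop:LikePart} differ by factors depending on the unobserved $X$-trajectory, and the same sup-and-cancel argument, now applied across the $X$-degrees of freedom available within $\mathcal{Q}$, removes those factors.

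The hard part will be justifying the factorization of the supremum: it requires that $\mathcal{Q}$ and the prior $\gamma$ both permit independent variation on $[0,t]$ versus $(t,s]$ (and, for the final reduction, on the $\CY$-adapted versus the $X$-trajectory degrees of freedom). This is natural for $\mathcal{Q}\subseteq\CM_{M|\CY}$ because generators in $\CA_\CY$ are $\CY$-adapted processes whose values at distinct times are not constrained to one another, but absent a time-local prior the decomposition $\alpha_{\mathcal{Q}|\mathbf{y}_t}(\bQ\|_{\F_s}) = \alpha_{\mathcal{Q}|\mathbf{y}_t}(\bQ\|_{\F_t}) + \delta_{t,s}(\bQ)$ fails in general, and the lemma would need additional structural hypotheses on $\gamma$.
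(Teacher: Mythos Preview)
You have misread the definition of the likelihood. In the paper, $L^{\mathrm{obs}}(\bQ|\mathbf{y}_t)$ is defined as (a version of) the Radon--Nikodym derivative $d\bQ\|_{\sigma(\mathbf{y}_t)}/d\bar\bP\|_{\sigma(\mathbf{y}_t)}$, i.e.\ the density on the \emph{observation} $\sigma$-algebra $\CY_t$, not the $\F_s$-density of Proposition~\ref{prop:LikeFull}. Consequently your proposed factorization
\[
\log L(\bQ\|_{\F_s}|\mathbf{y}_t) = \log L(\bQ\|_{\F_t}|\mathbf{y}_t) + \log R_{t,s}(\bQ)
\]
is vacuous: since $\CY_t\subset\F_t\subset\F_s$, restricting $\bQ$ first to $\F_s$ and then to $\CY_t$ is the same as restricting directly to $\CY_t$, so $L^{\mathrm{obs}}(\bQ\|_{\F_s}|\mathbf{y}_t)=L^{\mathrm{obs}}(\bQ\|_{\F_t}|\mathbf{y}_t)$ identically and $R_{t,s}\equiv 1$. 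There are no $(t,s]$-factors from Proposition~\ref{prop:LikeFull} present in the likelihood at all, because those factors involve the hidden path $X$ and are integrated out when passing from $\F$ to $\CY$.

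This is exactly the content of the paper's one-line proof: the likelihood term in $\alpha_{\mathcal{Q}|\mathbf{y}_t}$ depends on $\bQ$ only through $\bQ\|_{\CY_t}$, and the conditional expectation $\bE_\bQ[\xi|\mathbf{y}_t]$ depends only on $\bQ\|_{\F_t}$ for $\F_t$-measurable $\xi$; hence nothing changes when $\bQ$ is replaced by $\bQ\|_{\F_s}$ for any $s\ge t$. Your elaborate sup-and-cancel machinery, the appeal to time-local priors, and the ``final reduction'' across $X$-degrees of freedom are all unnecessary once the definition is read correctly. (Your closing caveat about $\gamma$ is a fair observation---the paper's proof does tacitly treat the prior as determined by the restricted measure---but this is a side issue, not the mechanism driving the lemma.)
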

\begin{proof}
 By definition, the likelihood is determined by the restriction of $\bQ$ to $\CY_t \subset \F_t \subset \F_s$, while the expectation depends only on the restriction of $\bQ$ to $\CF_t\subset \F_s$. As these are the only terms needed to compute the DR-expectation, the result follows.
\end{proof}

\begin{remark}
The purpose of the nonlinear expectation is to give an `upper' estimate of a random variable, accounting for uncertainty in the underlying probabilities. This is closely related to robust estimation in the sense of Wald \cite{Wald1945}. In particular,  one can consider the robust estimator given by 
\[\mathrm{arginf}_{\hat\xi\in\bR^N} \CE(\|\xi -\hat \xi\|^2|\CY_t),\]
which gives a `minimax' estimate of $\xi$, given the observations $\CY_t$ and a quadratic loss function. The advantage of the nonlinear expectation approach is that it allows one to construct such an estimate for \emph{every} random variable/loss function, giving a cost-specific quantification of uncertainty in each case. 

We can also see a connection with the theory of $H^\infty$ filtering (see, for example  Grimble and El Sayed \cite{Grimble1990} or more recently Zhang, Xia and Shi \cite{Zhang2009} and references therein, or the more general $H^\infty$-control theory in Ba\c{s}ar and Bernhard \cite{Basar1991}). In this setting, we look for estimates which perform best in the worst-situation, where `worst' is usually defined in terms of a perturbation to the input signal or coefficients. In our setting, we focus not on the estimation problem directly, but on the `dual' problem of building an upper expectation, i.e. calculating the `worst' expectation in terms of a class of perturbations to the coefficients (our setting is general enough that perturbation to the signal can also be included, through shifting the coefficients). 
\end{remark}

\begin{remark}
There are also connections between our approach and what is called `risk-sensitive filtering', see for example James, Baras and Elliott \cite{James1994},  Dey and Moore \cite{Dey1995}, or the review of Boel, James and Petersen \cite{Boel2002} and references therein (from an engineering perspective) or Hansen and Sargent \cite{Hansen2007, Hansen2008} (from an economic perspective). In their setting, one uses the nonlinear expectation defined by 
\[\CE(\xi|\CY_t) = -k \log \bE_\bP\big[\exp(- \xi/k)\big|\CY_t\big],\]
for some choice of robustness parameter $1/k>0$. This leads to significant simplification, as dynamic consistency and recursivity is guaranteed in every filtration (see Graf \cite{Graf1980} and Kupper and Schachermeyer \cite{Kupper2008}, and further discussion in Section \ref{sec:consist}) and the corresponding penalty function is given by the conditional relative entropy
\[\CR_t(\bQ) = k \bE_\bQ[\log(d\bQ/d\bP)|\CY_t],\]
the one-step penalty can be calculated accordingly.  The optimization for the nonlinear expectation can be taken over $\CM_1$, so this approach has a claim to be including `nonparametric' uncertainty, as all measures are considered, rather than purely Markov measures or measures in a parametric family (however the optimization can be taken over conditionally Markov measures, and one will obtain an identical result!). 

The difficulty with this approach is that it does not allow for easy incorporation of knowledge of the error of estimation of the generator $\fA$ in the level of robustness -- the only parameter available to choose is $k$, which multiplies the relative entropy. A small choice of $k$ corresponds to a small penalty, hence a very robust expectation, but this robustness is not directly linked to the estimation of the generators $\fA$. Therefore the impact of statistical estimation error remains obscure, as $k$ is chosen largely exogenously of this error. For this reason, our approach, which directly allows for the penalty to be based on the statistical estimation of the generators, has advantages over this simpler method.
\end{remark}

\section{Recursive penalties}

The DR-expectation provides us with an approach to including statistical estimation in our valuations. However, the calculations suggested by Remark \ref{rem:basicDRapplication} are generally intractable in their stated form. In this section, we shall see how the DR-expectation approach, and an approach with a constant penalty $\CR$, specialize in a filtering setting.

The class of models we shall consider are based on two key questions:
\begin{enumerate}
\item \emph{``Static or Dynamic generators?''} Are the generators $A$ and $C$ static (through time) and unknown (so can be estimated) or are they not only unknown but dynamically changing (and we depend principally on prior information about their likely behaviour)? 

In other words, do our models come from $\CM_M$ (static generators) or from $\CM_{M|\CY}$ (dynamic generators)?

\item \emph{``Uncertain prior (UP) or DR-expectations?''} Do we (i) have a fixed penalty on the `reasonableness' of a model, and then use new observations to update \emph{within} each model using Bayes' theorem, or are we (ii) attempting to determine which model is reasonable \emph{using} our observations, while simultaneously updating our model (with the same observations). 

In other words, is our penalty $\CR$ constant (UP framework) or does it change with new observations (DR framework)?
\end{enumerate}

For practical purposes, it is critical that we refine our approach to provide a recursive construction of our nonlinear expectation. In classical filtering, one obtains a recursion for expectations $\bE[\phi(X_t)|\CY_t]$, for Borel functions $\phi$; one does not typically consider the expectations of general random variables. In the same way, we will consider the expectations of random variables $\phi(X_t)$.

It is clear that we can consider $\CE_{\mathcal{Q}|\mathbf{y}_t}(\phi(X_t))$ as a nonlinear expectation with the probability space being only the states of $X_t$. By Theorem \ref{thm:penaltyexists}, it follows that, for each $t$, there exists a $\CY_t\otimes\mathcal{B}(\bR)$-measurable function $\kappa_t$ such that
\begin{equation}\label{eq:recursivePen}
 \begin{split}
\CE_{\mathcal{Q}|\mathbf{y}_t}(\phi(X_t)) &:= \sup_{\bQ\in\mathcal{Q}} \Big\{\bE_\bQ[\phi(X_t)|\mathbf{y}_t]-\CR_t(\bQ)\Big\}\\ &= \sup_{q\in S_N^+}\Big\{\sum_i q_i \phi(e_i) - \Big(\frac{1}{k}\kappa_t(\omega, q)\Big)^{k'}\Big\},\end{split}
\end{equation}
where $S_N^+$ denotes the probability simplex in $\bR^N$, that is, $S_N^+=\{x\in \bR^N: \sum_i x_i = 1,\quad  x_i\geq 0 \quad\forall i\}$. Our aim is to find a recursion for $\kappa_t$, for various choices of $\CR$. Without loss of generality, we will write 
\[\CR_t(\bQ) =  \Big(\frac{1}{k}\alpha_{(...)}(\bQ)\Big)^{k'},\]
where $\alpha$ has an appropriate set of arguments, as this gives consistent notation between the DR-expectation and Uncertain Prior settings.

If the generators are assumed to be static but unknown, the following definition will prove useful.
\begin{definition}
Let $K_t:\Omega\times S_N^+ \times \bA \to \bR$ denote an extended penalty function, which encodes the penalty associated with a state $X_t|\CY_t\sim p_t\in S_N^+$ and a generator $\mathfrak{A}\in\bA$. The penalty $\kappa_t:\Omega\times S_N^+\to\bR$ is obtained from $K_t$ through the relation
\begin{equation}\label{eq:kappaKDefn}
\kappa_t(\omega, p) = \inf_{\mathfrak{A}\in \bA} K_t(\omega, p,\mathfrak{A}).
 \end{equation}
\end{definition}

The reason for this definition is that, in a static generator framework, it is $K$, not $\kappa$, which will satisfy a recursive equation. We note, however, that the dimension of $(p, \mathfrak{A})$ is larger, and  typically much larger, than $N-1 = \mathrm{dim}S_N^+$, which leads to difficulty in implementation.

Our constructions will depend on the following object.
\begin{definition}\label{defn:backprojection}
For each generator $\fA=(A,C)\in\bA$ and each $p\in S_N^+$, we define the random set 
\[(p)_t^{\ofA} = \big\{p_{t-1}\in S_N^+: C(Y_t)Ap_{t-1} \propto p\big\}.\]
By recursion, we extend this to all $s<t$
\[(p)_t^{(\ofA,s-1)} = \big\{p_{s-1}\in S_N^+: C(Y_s)Ap_{s-1} \propto p_s\in (p)_t^{(\ofA,s)} \big\}.\]
\end{definition}
The set $(p)_t^{\ofA}$ represents the filter states at time $t-1$ which evolve to $p$ at time $t$. Clearly this is only known at time $t$, as it depends on $Y_t$. Similarly, we think of $(p)_t^{(\ofA,s)}$ as the set of filter states at time $s$ which would evolve to the state $p$ at time $t$, given the generator $\fA$ and the observations $\{Y_r\}_{r=s+1}^t$. This set may be empty, if no such filter states exist. As $C$ is a diagonal matrix, if its entries do not vanish\footnote{This corresponds to there being no state which yields a zero likelihood of the observed value.} then it is invertible. However, the matrix $A$ will often not be an invertible matrix, so $(p)_t^{(\ofA,s)}$ is not generally a singleton.

\subsection{Filtering with Uncertain Priors}
We shall first consider the case where we assume the filtering equations apply, and the only uncertainty in our model is given by our uncertainty over the prior inputs to the filter. In particular, this ``prior uncertainty'' is not updated given new observations, and $\CR$ is a fixed function of $\bQ$

We can now consider our two cases: with static and dynamic generators.
\subsubsection{Static Generators (StaticUP)}
With a static generator, our approach is simple.  
\begin{definition}
In a StaticUP setting, the inputs to the filtering problem are the initial filter state $p_0$ and the generator $\fA$. We therefore take a penalty $\CR(\bQ) = (k^{-1} \alpha(\bQ))^{k'}$ where 
\[\alpha(\bQ) = \gamma(p_0^\bQ, \fA^\bQ)\]
for some prescribed penalty $\gamma$. 
\end{definition}

\begin{remark}
Inspired by the DR-expectation, a natural choice of penalty function $\gamma$ is the negative log-density of a prior distribution for the inputs $(p_0, \fA)$, shifted to have minimal value zero. Alternatively, taking an empirical Bayesian perspective, $\gamma$ could be the log-likelihood from a prior calibration process. In this case, we are directly using our statistical uncertainty in the parameters in our understanding of the filtering problem.
\end{remark}

\begin{lemma}\label{lem:staticUPK}
If the extended penalty is defined by
\[K_t(p_t, \fA) = \inf_{p_0\in (p_t)^{(\ofA,0)}}\Big\{\gamma(p_0, \fA)\Big\},\]
with the convention $\inf\emptyset=\infty$, then writing $\kappa_t(p) = \inf_{\mathfrak{A}\in \bA} K_t(p,\mathfrak{A})$, we satisfy the dynamic updating equation \eqref{eq:recursivePen}.
\end{lemma}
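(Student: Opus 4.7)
The plan is to manipulate the defining supremum of the DR-expectation by reparameterizing the family of measures through their filter state at time $t$, and then to identify the resulting inner infimum with $\kappa_t$.

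First I would note that in the StaticUP setting a measure $\bQ\in\mathcal{Q}$ is determined by the pair $(p_0^\bQ,\fA^\bQ)$, so the defining supremum can be rewritten as
\[
\CE_{\mathcal{Q}|\mathbf{y}_t}(\phi(X_t))=\sup_{(p_0,\fA)}\Big\{\textstyle\sum_i p_{t,i}^{\fA,p_0}\phi(e_i)-\big(k^{-1}\gamma(p_0,\fA)\big)^{k'}\Big\},
\]
since $\bE_\bQ[\phi(X_t)\mid \mathbf{y}_t]=\sum_i p_{t,i}^{\fA,p_0}\phi(e_i)$ by the standard filtering recursion \eqref{eq:basicfilter}, which holds for every $\bQ\in\CM_{M|\CY}$.

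The key step is to reparameterize by $q:=p_t^{\fA,p_0}\in S_N^+$, so that the expectation piece $\sum_i q_i\phi(e_i)$ depends on $(p_0,\fA)$ only through $q$. This allows me to split the supremum as
\[
\sup_{q\in S_N^+}\Big\{\textstyle\sum_i q_i\phi(e_i)+\sup_{\fA}\sup_{p_0\in(q)_t^{(\ofA,0)}}\!\big\{-\big(k^{-1}\gamma(p_0,\fA)\big)^{k'}\big\}\Big\},
\]
because, by Definition \ref{defn:backprojection}, $(q)_t^{(\ofA,0)}$ is exactly the set of initial filter states which, under generator $\fA$ and the observed $\mathbf{y}_t$, are driven to $q$ at time $t$. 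I would then exchange $\sup(-\cdot)$ with $-\inf(\cdot)$, and use monotonicity of $x\mapsto (x/k)^{k'}$ on $[0,\infty)$ (recalling $\gamma\geq 0$ after the normalization in the definition of a penalty) to push the infimum inside the power, yielding
\[
\sup_{q\in S_N^+}\!\Big\{\textstyle\sum_i q_i\phi(e_i)-\big(k^{-1}\inf_{\fA}\inf_{p_0\in(q)_t^{(\ofA,0)}}\gamma(p_0,\fA)\big)^{k'}\Big\},
\]
which is exactly the right-hand side of \eqref{eq:recursivePen} once $K_t$ and $\kappa_t$ are identified via \eqref{eq:kappaKDefn}.

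Finally I would address two small technical points. First, the convention $\inf\emptyset=\infty$ handles those $q\in S_N^+$ that are not reachable from any prior/generator pair: for such $q$ every $(q)_t^{(\ofA,0)}$ is empty, $K_t(q,\fA)=\infty$, hence $\kappa_t(q)=\infty$, and this $q$ contributes $-\infty$ to the outer supremum, consistent with the original parameterization. Second, the $\CY_t$-measurability required in \eqref{eq:recursivePen} follows from the standing assumption in the remark after Definition \ref{defn:DRexp} (Filippov's implicit function theorem) applied to the infimum over $(\fA,p_0)$ with the constraint encoded by the filter recursion. I do not expect a real obstacle here; the proof is essentially a bookkeeping exercise in swapping the order of optimization, with the only subtlety being the use of monotonicity to commute the inner infimum past the power $(\cdot)^{k'}$.
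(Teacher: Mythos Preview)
Your argument is correct and follows essentially the same route as the paper's proof: reparameterize the supremum over measures by the filter state $q=p_t^{\fA,p_0}$, then recognise the inner optimization as $\inf_{\fA}\inf_{p_0\in(q)_t^{(\ofA,0)}}\gamma(p_0,\fA)=\kappa_t(q)$. Your version is in fact more explicit than the paper's, which simply asserts that it suffices to check $\kappa_t(q)=\inf\{\alpha(\bQ):X_t|\CY_t\sim q\}$; you spell out the monotonicity step needed to commute the infimum past the power $(\cdot)^{k'}$, and you also handle the unreachable-$q$ case and measurability, both of which the paper leaves implicit.
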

\begin{proof}
From \eqref{eq:recursivePen}, we see that it is enough to guarantee that 
\begin{align*}
\kappa_t(q) &= \inf_{\bQ\in \CM_M}\Big\{\alpha(\bQ): X_t|\CY_t \sim q \text{ under }\bQ\Big\}\\
&=\inf_{\fA\in\bA, p_0\in S_N^+} \Big\{\alpha(\bQ^{\fA, p_0}):  X_t|\CY_t \sim q \text{ under }\bQ\Big\}\\
&=\inf_{\fA\in\bA} \inf_{p_0\in (q)_t^{(\ofA,0)}} \Big\{ \gamma(p_0, \fA)\Big\} = \inf_{\fA\in\bA} K(q, \fA).
\end{align*}
The result then follows from \eqref{eq:kappaKDefn}.
\end{proof}

The extended penalty $K$ is useful, as it can be calculated recursively.
\begin{theorem}\label{thm:fixUP}
The extended penalty $K$ satisfies the recursion
\[K_{t}(p,\fA) = \inf_{p_{t-1}\in (p)_t^{\ofA}}\Big\{K_{t-1}(p_{t-1},\fA)\Big\}.\]
\end{theorem}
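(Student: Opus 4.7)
The plan is to unwind the definition of $K_t$ directly and rely on the recursive structure already built into the back-projection sets $(p)_t^{(\ofA,s)}$ from Definition \ref{defn:backprojection}. By construction, $K_t(p,\fA)$ is an infimum of $\gamma(p_0,\fA)$ over a set of admissible initial states $p_0 \in (p)_t^{(\ofA,0)}$; likewise $K_{t-1}(p_{t-1},\fA)$ is an infimum over $p_0 \in (p_{t-1})_{t-1}^{(\ofA,0)}$. The claim is essentially that these two sets stack: one can reach $p$ at time $t$ from $p_0$ at time $0$ iff one passes through some $p_{t-1}$ at time $t-1$ lying in $(p)_t^{\ofA}$.

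First I would verify the set-theoretic identity
\[
(p)_t^{(\ofA,0)} \;=\; \bigcup_{p_{t-1}\in(p)_t^{\ofA}}(p_{t-1})_{t-1}^{(\ofA,0)}.
\]
This follows by unfolding the recursive definition once at the ``top''. Indeed, $p_0 \in (p)_t^{(\ofA,0)}$ means that the filter trajectory started at $p_0$ under generator $\fA$ passes through some sequence $p_1,\ldots,p_{t-1}$ with $C(Y_t)Ap_{t-1}\propto p$, i.e.\ $p_{t-1}\in(p)_t^{\ofA}$; and once $p_{t-1}$ is fixed, the remaining conditions $C(Y_s)Ap_{s-1}\propto p_s$ for $s=1,\ldots,t-1$ say exactly that $p_0 \in (p_{t-1})_{t-1}^{(\ofA,0)}$. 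The reverse inclusion is immediate by concatenation. A short induction on $t-s$ formalizes this if one prefers.

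Once this identity is in hand, the conclusion is a routine interchange of infima:
\begin{align*}
K_t(p,\fA)
&=\inf_{p_0\in(p)_t^{(\ofA,0)}}\gamma(p_0,\fA)\\
&=\inf_{p_{t-1}\in(p)_t^{\ofA}}\,\inf_{p_0\in(p_{t-1})_{t-1}^{(\ofA,0)}}\gamma(p_0,\fA)\\
&=\inf_{p_{t-1}\in(p)_t^{\ofA}}K_{t-1}(p_{t-1},\fA).
\end{align*}
The only mild subtlety, and the main thing to keep track of, is the convention $\inf\emptyset=\infty$: if $(p)_t^{\ofA}$ is empty, then $(p)_t^{(\ofA,0)}$ is empty and both sides equal $+\infty$; and if $(p)_t^{\ofA}$ is nonempty but every $(p_{t-1})_{t-1}^{(\ofA,0)}$ is empty, the inner infima are all $+\infty$, matching the fact that $(p)_t^{(\ofA,0)}$ is then empty too. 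No measure-theoretic or analytic machinery beyond unwinding definitions is required; the work has already been done in choosing the right bookkeeping for $K_t$.
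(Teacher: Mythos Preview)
Your proof is correct and is exactly the approach the paper takes; the paper simply declares the result ``immediate from the definition of $(p)_t^{\ofA}$'', and what you have written is the natural unpacking of that one-line assertion. The set identity $(p)_t^{(\ofA,0)} = \bigcup_{p_{t-1}\in(p)_t^{\ofA}}(p_{t-1})_{t-1}^{(\ofA,0)}$ and the interchange of infima are precisely the content hidden behind the word ``immediate''.
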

\begin{proof}
This is immediate from the definition of $(p)_t^{\ofA}$.
\end{proof}

\begin{remark}
We emphasize that there is no learning of $\fA$ being done in this framework -- the penalty on $\fA$ applied at time $0$ is simply propagated forward; our observations do not affect our opinion of its likely value. Furthermore, we are not adjusting our prior penalty to account for the `unreasonableness' of our models as we observe data. In particular, if we assume no knowledge of the initial state (i.e. a zero penalty), then we will have no knowledge of the state at time $t$ (unless the observations cause the filter to degenerate).
\end{remark}

\begin{example}\label{example1}
We take the class of models in $\CM_M$ where $A$ and $C$ are perfectly known, and $A= I$, so $X_t=X_0$ is constant (but $X_0$ is unknown). We take $N=2$, so $X$ takes only one of two values. Finally, we assume that 
\[Y_t|(X_t=e_1) \sim \mathrm{Bernoulli}(a), \qquad Y_t|(X_t=e_2) \sim \mathrm{Bernoulli}(b),\]
where $a,b\in (0,1)$. Effectively, in this example we are using filtering to determine which of two possible means is the correct mean for our observation sequence. It is worth emphasising that the filter process $p$ corresponds to the posterior probabilities, in a Bayesian setting, of the events that our Bernoulli process has parameter $a$ or $b$.

It will be useful to note that, from classical Bayesian statistical calculations\footnote{One can derive the stated formula using the filtering equations, for the vector $p_t=(p_t^1, p_t^2)^\top$. However, the closed-form solution given here is more easily obtained using alternative methods for Bayesian hypothesis testing (which is effectively what this problem encodes).}, for a given $p_0$, one can see that the corresponding value of $p_t$ is determined from the log odds ratio
\[\log\Big(\frac{p_t^1}{p_t^2}\Big) = \log\Big(\frac{p_0^1}{p_0^2}\Big)+ {N\bar Y}\log\Big(\frac{a}{b}\Big)+N(1-\bar Y)\log\Big(\frac{1-a}{1-b}\Big) .\]

To write down the \emph{StaticUP} penalty function, let the (known) dynamics be described by $\fA^*$. Consequently, we can write  $K(p, \fA)=\infty$ for all $\fA\neq \fA^*$. As $\fA^*$ is known, there is no distinction between $K$ and $\kappa$, so 
\[\kappa_t(p) =K_t(p, \fA^*) = \inf_{p_0\in (p)_t^{(\overleftarrow{\fA^*},0)}}\Big\{\gamma(p_0, \fA^*)\Big\}.\] We initialize with a known penalty $\gamma(p, \fA^*)=\kappa_0(p)$ for all $p\in S_N^+$. 

In this setting, we can express our penalty in terms of the log-odds, for the sake of notational simplicity given the closed-form solution to the filtering problem, and hence can explicitly calculate $(p_t)^{(\overleftarrow{\fA^*},0)}$, which contains only single points. In particular, the time-$t$ penalty is given by a shift of the initial penalty:
\[\kappa_t\bigg(\log\Big(\frac{p_t^1}{p_t^2}\Big)\bigg) = \kappa_0\bigg(\log\Big(\frac{p_t^1}{p_t^2}\Big)-{N\bar Y}\log\Big(\frac{a}{b}\Big)-N(1-\bar Y)\log\Big(\frac{1-a}{1-b}\Big) \bigg).\]
\end{example}
\begin{remark}
This example demonstrates the following behaviour:
\begin{itemize}
\item If the initial penalty is zero, then the penalty at time $t$ is also zero -- there is no learning of which state we are in. 
\item There is no variation in the curvature of the penalty (and so no change in our `uncertainty'), we simply shift the penalty around, corresponding to our changing posterior probabilities.
\item The update of $\kappa$ is done purely using the tools of Bayesian statistics, rather than having any direct incorporation of our uncertainty.
\end{itemize}
\end{remark}
\begin{remark}
We point out that this is, effectively, the model of uncertainty proposed by Walley \cite{Walley1991} (see in particular Section 5.3, although he there takes a model where the unknown parameter is Beta distributed). See also Fagin and Halpern \cite{Fagin90}.
\end{remark}

\subsubsection{Dynamic generators (DynamicUP)}
If we model the generator $\fA$ as fixed and unknown, calculation of $K_t(p, \fA)$ suffers from a curse of dimensionality. We also need to assume that $\fA$ is constant through time, which may be a dubious assumption in practice. 

In this case, we can obtain a practical model by allowing $\fA$ to vary independently at each point in time. Superficially, this significantly worsens the curse of dimensionality, as we no longer take a fixed $\fA$, but regard it as a process through time. The advantage of this is that we can then use dynamic programming to calculate the penalty $\kappa_t(p_t)$. 

In order to include our knowledge of the generator, we will write $\fA_t$ for the generator applicable at time $t$. Recall that this is a process taking values in $\bA$, and we write $\CA_{\CY}$ for the space of such processes adapted to the observation filtration. For $\fA\in \CA_\CY$, we define $(p)_t^{(\ofA,0)}$ using the natural analogue of Definition \ref{defn:backprojection} incorporating time dependence.

\begin{definition}
In the DynamicUP setting, for an initial penalty on the initial hidden state, $\kappa_0(p_0)$, and a penalty on the time-$t$ generator, $\gamma_t(\fA_t)$, our total penalty is given by 
\[\alpha(\bQ) = \kappa_0(p_0^\bQ) + \sum_{s=1}^{\infty} \gamma_s(\fA_s^\bQ).\]
\end{definition}

\begin{theorem}\label{thm:dynamicUP}
We can define a dynamic penalty
\begin{equation}\label{eq:dynamRecurs0}
 \kappa_t(p_t) = \inf_{\fA\in \CA_\CY}\bigg\{\inf_{p_0 \in (p_t)^{(\{\ofA_t\},0)}}\Big\{\kappa_0(p_0) + \sum_{s=1}^{t} \gamma_s(\fA_s)\Big\}\bigg\}.
\end{equation}
which satisfies \eqref{eq:recursivePen}. Furthermore, the penalty satisfies the recursion
\begin{equation}\label{eq:dynamRecurs1}
 \kappa_t(p_t) = \inf_{\fA_t\in\bA}\bigg\{\inf_{p_{t-1}\in (p_t)^{\ofA_t}}\Big\{\kappa_{t-1}(p_{t-1}) + \gamma_t(\fA_t)  \Big\}\bigg\}.
\end{equation}
\end{theorem}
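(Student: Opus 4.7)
The plan is to prove the two claims in order, first deriving \eqref{eq:dynamRecurs0} as the correct expression for the penalty $\kappa_t$ associated with the conditional distribution of $X_t$, and then deducing \eqref{eq:dynamRecurs1} by a dynamic programming argument.

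For the first claim, I would start from \eqref{eq:recursivePen}: since $\phi(X_t)$ is $\CF_t$-measurable, Lemma \ref{lem:horizonirrelevant} lets me discard all terms $\gamma_s(\fA_s^\bQ)$ with $s>t$ in the penalty (they only affect the restriction of $\bQ$ to $\CF_s$ for $s>t$, which is irrelevant to the computation of both the conditional expectation and the likelihood). Every measure $\bQ\in\CM_{M|\CY}$ is parametrized by its initial distribution $p_0^\bQ$ and its generator process $\fA^\bQ\in\CA_\CY$. Since $\bE_\bQ[\phi(X_t)\mid \CY_t] = \sum_i p_{t,i}^{\fA^\bQ,p_0^\bQ}\phi(e_i)$, the supremum in \eqref{eq:recursivePen} is achieved by first fixing $q=p_t^{\fA^\bQ,p_0^\bQ}$ and then taking the infimum of the penalty over all $\bQ$ consistent with that filter state. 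The consistency condition $p_t^{\fA,p_0}=q$ is precisely the statement $p_0\in (q)_t^{(\ofA,0)}$, by definition of the back-projected set. This gives
\[
\kappa_t(q) \;=\; \inf_{\fA\in\CA_\CY}\;\inf_{p_0\in (q)_t^{(\ofA,0)}}\Big\{\kappa_0(p_0)+\sum_{s=1}^{t}\gamma_s(\fA_s)\Big\},
\]
which, together with \eqref{eq:recursivePen} and the matching of $\alpha(\bQ)$ with $\kappa_0(p_0^\bQ)+\sum_{s=1}^t\gamma_s(\fA^\bQ_s)$, establishes \eqref{eq:dynamRecurs0}.

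For the recursion \eqref{eq:dynamRecurs1}, I would apply the dynamic programming principle. Split the infimum in \eqref{eq:dynamRecurs0} by fixing $\fA_t$ and the intermediate filter state $p_{t-1}$: given any admissible triple $(p_0,\fA_1,\dots,\fA_t)$ whose filter at time $t$ equals $p_t$, the time-$(t-1)$ filter state $q:=p_{t-1}^{\fA,p_0}$ is uniquely determined, and by the filtering recursion \eqref{eq:basicfilter} one has $q\in (p_t)^{\ofA_t}$. Conversely, any $\fA_t\in\bA$, $q\in(p_t)^{\ofA_t}$, and any path $(p_0,\fA_1,\dots,\fA_{t-1})$ with time-$(t-1)$ filter state $q$ yields an admissible triple. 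Therefore
\[
\kappa_t(p_t)=\inf_{\fA_t\in\bA}\inf_{q\in(p_t)^{\ofA_t}}\Bigg\{\gamma_t(\fA_t)+\inf_{\fA_{1:t-1},\,p_0\in(q)^{(\overleftarrow{\fA_{1:t-1}},0)}}\Big\{\kappa_0(p_0)+\sum_{s=1}^{t-1}\gamma_s(\fA_s)\Big\}\Bigg\},
\]
and the inner infimum is precisely $\kappa_{t-1}(q)$ by \eqref{eq:dynamRecurs0} applied at time $t-1$, yielding \eqref{eq:dynamRecurs1}.

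The main obstacle is not algebraic but logical: the two infima must genuinely interchange, and the consistency condition on filter states has to be correctly phrased in terms of the $\CY$-adapted generator process. Concretely, I need to verify that restricting the tail generators $\{\fA_s\}_{s>t}$ is admissible (handled by Lemma \ref{lem:horizonirrelevant}), that the filter recursion is deterministic given $(p_0,\fA)$ and the observations (so the constraint $p_t^{\fA,p_0}=p_t$ is pointwise rather than merely in law), and that the decomposition of the back-projected set $(p_t)^{(\ofA,0)}$ into a one-step projection $(p_t)^{\ofA_t}$ followed by $(q)^{(\overleftarrow{\fA_{1:t-1}},0)}$ is exact. All three follow straightforwardly from Definition \ref{defn:backprojection} and the filtering equation \eqref{eq:basicfilter}, but it is worth stating them explicitly so that the dynamic programming step is rigorous despite the presence of the generally non-invertible matrix $A$, which is why the back-projected sets need not be singletons.
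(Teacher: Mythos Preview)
Your proposal is correct and follows essentially the same approach as the paper: invoke Lemma~\ref{lem:horizonirrelevant} to truncate the penalty at horizon $t$, reparametrize the supremum in \eqref{eq:recursivePen} in terms of the terminal filter state $q$ to obtain \eqref{eq:dynamRecurs0}, and then derive the recursion \eqref{eq:dynamRecurs1} by decomposing the back-projected set one step at a time via dynamic programming. The paper's own proof is a two-sentence sketch gesturing at exactly these ingredients, so you have simply written out the details that the paper leaves implicit.
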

\begin{proof}
From Lemma \ref{lem:horizonirrelevant} it is clear that, with our definition of $\kappa_t$, \eqref{eq:recursivePen} can be obtained as a reparameterization of the nonlinear expectation. The recursion \eqref{eq:dynamRecurs1} follows by the definition of $(p)_t^{\ofA_t}$ and standard dynamic programming arguments, as in the StaticUP case.
\end{proof}

This formulation of our problem allows us to use dynamic programming to solve our problem forward in time. In the setting of Example \ref{example1}, as the dynamics $\fA$ are perfectly known, there is no distinction between the dynamic and static cases.

\begin{remark} The dynamic formulation adds penalties together, so if the same penalty on $\fA$ is used in the static and dynamic settings, then the dynamic setting will typically have a higher penalty than the static setting. Practically, this effect is lessened by the minimization in \eqref{eq:dynamRecurs0}, and the fact that the filter has good ergodic properties (as discussed by van Handel \cite{Handel2008}). This ergodicity implies that, at time $t$, the filter will not significantly depend on the generator $\fA_s$ for $s\ll t$, and so the minimization will render the penalty at $s$ irrelevant.
\end{remark}

 A continuous-time version of this setting is considered in \cite{Allan2017}.

\subsection{Filtering with DR-expectations}

In the above, we have regarded the prior as uncertain, and used this to penalize over models. We did not use the data to modify our penalty function, simply to revise \emph{within} each model. The DR-expectation appoach gives us an alternative approach, in which the data guides our model choice more directly. In what follows, we will apply the DR-expectation in our filtering context, and observe that it gives a slightly different recursion for the penalty function. Again, we can consider models where $\fA$ is regarded as static or as dynamic.

\subsubsection{Static generators (StaticDR)}

For a fixed $\fA$, we have already written the likelihood function (Proposition \ref{prop:LikePart}). As we are working in the observation filtration, this gives the natural decomposition of the likelihood for our calculation. This leads us to a simple formulation of the penalty $\alpha_{\mathcal{Q}|\mathbf{y}_t}$.

\begin{lemma}\label{lem:divergenceDRfilter}
Suppose the initial filter state $p_0$ and static generator $\fA$ are distributed according to the prior density $\exp(-\gamma(p_0, \fA))$. Then the $\mathcal{Q}|\mathbf{y}_t$-divergence of the measure $\bQ^{(p_0, \fA)}$ (i.e. normalized log-posterior density of the parameters $(p_0, \fA)$) given observations $\mathbf{y}_t = (Y_1,... Y_t)$  can be written
\[\alpha_{\mathcal{Q}|\mathbf{y}_t}(\bQ^{(p_0, \fA)}) = \gamma(p_0, \fA)-\sum_{s\leq t}\log\Big(c_\fA (Y_s; A^\fA p_{s-1}^{\fA, p_0})\Big) + m_t,\]
where $m$ is a sequence of normalizing values, independent of $\fA$ and $p_0$, such that $\alpha_{\mathcal{Q}|\mathbf{y}_t}$ has minimal value zero, and $p_s^{\fA, p_0}$ is the solution to the filtering equations with the given generator $\fA$ and initial filter state $p_0$.
\end{lemma}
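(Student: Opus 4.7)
The plan is to simply substitute the expression for the likelihood from Proposition \ref{prop:LikePart} into the definition of the $\mathcal{Q}|\mathbf{y}_t$-divergence in \eqref{eq:divergence}, and collect those terms that do not depend on $(p_0,\fA)$ into the normalizing constant $m_t$.

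First, I would invoke Lemma \ref{lem:horizonirrelevant} to justify working with the restriction of $\bQ^{(p_0,\fA)}$ to $\CY_t$, since the random variable of interest (implicit in the definition of $\alpha_{\mathcal{Q}|\mathbf{y}_t}$) depends only on the observations through time $t$. Then by Proposition \ref{prop:LikePart}, the observational part of the likelihood is
\[
L^{\mathrm{obs}}(\bQ^{(p_0,\fA)}\mid \mathbf{y}_t) \;=\; \prod_{s=1}^t c_\fA\big(Y_s;\, A^\fA p_{s-1}^{\fA,p_0}\big),
\]
and by definition of the posterior relative density, $L(\bQ^{(p_0,\fA)}\mid\mathbf{y}_t) = L^{\mathrm{obs}}(\bQ^{(p_0,\fA)}\mid\mathbf{y}_t)\exp(-\gamma(p_0,\fA))$. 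Taking minus the logarithm yields
\[
-\log L(\bQ^{(p_0,\fA)}\mid \mathbf{y}_t) \;=\; \gamma(p_0,\fA) - \sum_{s\le t} \log\Big(c_\fA\big(Y_s;\, A^\fA p_{s-1}^{\fA,p_0}\big)\Big).
\]

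Next I would identify the second term in \eqref{eq:divergence}, namely $\sup_{\tilde\bQ\in\mathcal{Q}}\log L(\tilde\bQ\mid\mathbf{y}_t)$, as a quantity $m_t$ that depends only on $\mathbf{y}_t$ and on the class $\mathcal{Q}$, but not on the particular $(p_0,\fA)$ parameterising $\bQ$. By definition this supremum is taken over all parameters in $\mathcal{Q}$, so it is constant in $(p_0,\fA)$; we simply set $m_t := \sup_{\tilde\bQ\in\mathcal{Q}}\log L(\tilde\bQ\mid\mathbf{y}_t)$, which shifts $\alpha_{\mathcal{Q}|\mathbf{y}_t}$ to have minimal value zero as required. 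Putting the two pieces together gives the stated formula.

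I do not foresee any real obstacle: the proof is essentially a rewriting, since Proposition \ref{prop:LikePart} already does the heavy lifting of re-expressing the likelihood through the filter recursion, and Lemma \ref{lem:horizonirrelevant} guarantees that restricting to $\CY_t$ is legitimate. The only minor care needed is to check measurability/finiteness of the supremum $m_t$, which is handled by the blanket regularity assumption on $\mathcal{Q}$ made after Definition \ref{defn:DRexp} (continuity in parameters plus Borel measurability in $\mathbf{y}_t$, with Filippov's theorem ensuring that $m_t$ is a Borel function of $\mathbf{y}_t$).
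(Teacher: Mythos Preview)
Your proposal is correct and follows essentially the same route as the paper: both arguments decompose the negative log-posterior into the prior contribution $\gamma(p_0,\fA)$ and the negative log-likelihood, invoke Proposition~\ref{prop:LikePart} for the latter, and absorb the supremum over $\mathcal{Q}$ into the normalizing constant $m_t$. Your appeal to Lemma~\ref{lem:horizonirrelevant} is harmless but not strictly needed here, since the likelihood $L^{\mathrm{obs}}$ is already defined via the restriction to $\sigma(\mathbf{y}_t)$.
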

\begin{proof}
 The divergence is given by the negative posterior log-density of $(p_0, \fA)$, shifted to have minimal value zero. From Bayes' rule, this is simply the sum of the negative prior log-density and the negative log-likelihood. The likelihood is stated in Proposition \ref{prop:LikePart}, and gives us the term $\sum_{s\leq t}\log\big(c_\fA (Y_s; A^\fA_s p_{s-1}^{\fA, p_0})\big)$. The prior then contributes the term $\gamma(p_0, \fA)$, as desired.
\end{proof}

\begin{remark}
 In this framework, we have not assumed we can propagate our uncertainty using the filtering equations, as in an uncertain prior setting -- we are instead simply evaluating the DR-expectation conditional on our observations. Nevertheless, the filtering equations naturally appear through their presence in the likelihood function, and so will still form part of our dynamic penalty.
\end{remark}

\begin{theorem}\label{thm:fixDR}
Consider the family of models with a fixed, but unknown, generator $\fA$. The DR-expectation can then be calculated as in \eqref{eq:recursivePen}, writing $\kappa_t(p) = \inf_{\mathfrak{A}\in \CA} K_t(p,\mathfrak{A})$, where 
\[K_t(p, \fA) := \inf_{p_0\in (p)_t^{(\ofA,0)}}\Big\{\gamma(p_0, \fA)-\sum_{s\leq t}\log\Big(c_\fA (Y_s; A^\fA_s p_{s-1}^{\fA, p_0})\Big) + m_t\Big\}.\]
Furthermore, $K$ has recursive representation
\[K_t(p,\fA) = \inf_{p_{t-1}\in (p)_t^{\ofA}} \Big\{K_{t-1}(p_{t-1}, \fA) -\log\Big(c_\fA (Y_t; A^\fA p_{t-1}\Big)\Big\} + m_t,\]
where $K(p_0, \fA) =\gamma(p_0, \fA)$  and $m_t$ is a constant to ensure $\inf_{p_t, \fA} K_t(p_t, \fA) =0$.
\end{theorem}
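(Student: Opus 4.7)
The plan is to follow essentially the same strategy as for Lemma~\ref{lem:staticUPK} and Theorem~\ref{thm:fixUP}, but with the divergence from Lemma~\ref{lem:divergenceDRfilter} playing the role of the static penalty $\gamma$. First I would establish the static representation of $K_t$; then I would peel off the last term of the sum to get the recursion.

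For the representation of $K_t$: start from \eqref{eq:recursivePen}, noting that under $\bQ^{\fA,p_0}\in\CM_M$ the filter equation \eqref{eq:basicfilter} makes $p_t^{\fA,p_0}$ a deterministic functional of $\mathbf{y}_t$, so the conditional law $X_t\mid\CY_t$ under $\bQ^{\fA,p_0}$ is exactly $p_t^{\fA,p_0}$. Hence the reparametrised supremum in \eqref{eq:recursivePen} restricts to measures with $p_t^{\fA,p_0}=p$, which by Definition~\ref{defn:backprojection} is equivalent to $p_0\in (p)_t^{(\ofA,0)}$. Substituting the explicit expression for $\alpha_{\mathcal{Q}|\mathbf{y}_t}(\bQ^{(p_0,\fA)})$ from Lemma~\ref{lem:divergenceDRfilter} and minimising over $p_0\in (p)_t^{(\ofA,0)}$ yields the claimed formula for $K_t(p,\fA)$; the outer infimum over $\fA$ then recovers $\kappa_t$ via \eqref{eq:kappaKDefn}.

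For the recursion: isolate the $s=t$ summand, so that
\[
K_t(p,\fA) = \inf_{p_0\in (p)_t^{(\ofA,0)}}\!\Big\{\gamma(p_0,\fA) - \sum_{s=1}^{t-1}\log c_\fA\bigl(Y_s;A^\fA p_{s-1}^{\fA,p_0}\bigr) - \log c_\fA\bigl(Y_t;A^\fA p_{t-1}^{\fA,p_0}\bigr)\Big\} + m_t.
\]
Now stratify the infimum according to the intermediate value $p_{t-1}:=p_{t-1}^{\fA,p_0}$. By the recursive definition of $(p)_t^{(\ofA,0)}$, the condition $p_0\in (p)_t^{(\ofA,0)}$ is equivalent to the pair of conditions $p_{t-1}\in (p)_t^{\ofA}$ and $p_0\in (p_{t-1})_{t-1}^{(\ofA,0)}$. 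On the latter set the last log-likelihood term depends only on $p_{t-1}$, not on $p_0$, so it factors out of the inner infimum, and the remaining inner infimum is precisely $K_{t-1}(p_{t-1},\fA)-m_{t-1}$ by the already-established representation. Absorbing $m_{t-1}$ into the step-$t$ normalisation constant (interpreting the $m_t$ appearing in the recursion as the fresh normalising shift needed to force $\inf_{p,\fA}K_t(p,\fA)=0$) gives exactly the stated recursion. The base case $K_0(p_0,\fA)=\gamma(p_0,\fA)$ is immediate since $(p_0)_0^{(\ofA,0)}=\{p_0\}$ and the sum is empty.

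The only slightly delicate point is bookkeeping the constants $m_t$: they are defined implicitly by a normalisation, and the $m_t$ in the global formula is not literally equal to the $m_t$ that appears as the additive shift in the recursion. I would spell this out in a short remark, noting that the recursion determines $K_t$ only up to an additive constant (and so only up to the normalisation convention), which is all that is required since additive constants cancel in the representation \eqref{eq:recursivePen} of $\CE_{\mathcal{Q}|\mathbf{y}_t}$. All other steps are routine once the divergence formula from Lemma~\ref{lem:divergenceDRfilter} and the backprojection sets of Definition~\ref{defn:backprojection} are in hand.
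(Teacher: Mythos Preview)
Your proposal is correct and follows essentially the same approach as the paper: reparametrize the supremum in \eqref{eq:recursivePen} via the filter state, invoke Lemma~\ref{lem:divergenceDRfilter} for the divergence, and then peel off the last summand using the recursive structure of the backprojection sets. The paper's own proof is in fact considerably terser---it just says ``by rearrangement'' and ``by the usual dynamic programming arguments''---so your explicit stratification by $p_{t-1}$ and your remark on the bookkeeping of the normalising constants $m_t$ add clarity without departing from the intended argument.
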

\begin{proof}
As we are looking at a function of the current hidden state, we are interested in the minimal penalty associated with each $p_t$ (as any larger penalty will be ignored by the supremum when calculating the DR-expectation). Given our definition of $K$,  as in the StaticUP case (Lemma \ref{lem:staticUPK}), we observe that $\kappa_t(p) = \inf_{\fA\in\bA}\{K_t(p,\fA)\}$ satisfies \eqref{eq:recursivePen} by rearrangement. Using the definition of $(p)_t^{\ofA}$ and the recursive nature of the filter, by the usual dynamic programming arguments it is easy to see that we can calculate the penalty recursively, 
\[K_t(p,\fA) = \inf_{p_{t-1}\in (p)_t^{\ofA}} \Big\{K_{t-1}(p_{t-1}, \fA) -\log\Big(c_\fA (Y_t; A^\fA p_{t-1}\Big)\Big\} + m_t\]
with the initial value given by the prior penalty $K(p_0, \fA) = \gamma(p_0, \fA)$.
\end{proof}

\begin{remark}
Comparing the results of Theorem \ref{thm:fixUP} with Theorem \ref{thm:fixDR}, we see that the key distinction is the presence of the log-likelihood term $\log\big(c_\fA (Y_s; A^\fA_s p_{s-1}^\fA)\big)$. This term implies that observations of $Y$ will affect our quantification of uncertainty, rather than purely updating each model. We also note that the filtering equations have arisen naturally as a part of the penalty (through their appearance in the likelihood function adapted to $\CY$), rather than us assuming that the filtering equations are the `correct' method for updating in the presence of uncertainty.
\end{remark}

\begin{example}
In the setting of Example \ref{example1}, recall that $X$ is constant, so we know $\fA$. One can calculate the penalty either directly, or through solving the stated recursion using the dynamics of $p$. The resulting penalty is given by first calculating $p_0$ from $p_t$ through
\[\log\Big(\frac{p_0^1}{p_0^2}\Big) = \log\Big(\frac{p_t^1}{p_t^2}\Big)-{N\bar Y}\log\Big(\frac{a}{b}\Big)-N(1-\bar Y)\log\Big(\frac{1-a}{1-b}\Big)\]
and then 
\[\kappa_t(p_t) = \kappa_0(p_0)- \log\Big(p_0^1 a^{N\bar Y}(1-a)^{N(1-\bar Y)} + p_0^2 b^{N\bar Y}(1-b)^{N(1-\bar Y)}\Big)-m_t, \]
where $m_t$ is chosen to ensure $\inf_p \kappa_t(p)=0$. From this, we can see that the likelihood will modify our uncertainty directly, rather than us simply propagating each model via Bayes' rule. A consequence of this is that, if we start with extreme uncertainty ($\kappa_0\equiv 0$), then our observations will teach us what models are reasonable, thereby reducing our uncertainty (i.e. we will find $\kappa_t(p)>0$ for $p\in(0,1)$ when $t>0$).
\end{example}

\subsubsection{Dynamic generators (DynamicDR)}

As in the uncertain priors case, it is often practically inconvenient to calculate a recursion for $K(p, \fA)$, given the high dimension of $\fA$. We can avoid this by allowing $\fA$ to vary through time. In order to place this within the DR expectation framework, we consider the following models.

\begin{definition}\label{def:dynamicgenInd}

Suppose $\{\fA_t\}_{t>0}$ is a sequence, selected according to distributions with conditional density 
\[\fA_t|\{\fA_s, Y_s\}_{s<t}\sim\exp(-\gamma(\fA_t; \{Y_s\}_{s<t})),\] for $\gamma$ a known function. Note that this structure implies $\{\fA_t\}_{t>0}\in \CA_\CY$.

In addition, suppose that $p_0$ is selected from a distribution with density $\exp(-\pi(p_0))$ on $S_N^+$, independently of $\{\fA_t\}_{t>0}$.
\end{definition}
This assumption allows us to decouple the choice of generators at different times, leading to a significant reduction in complexity. The prior penalty, represented by $(\gamma, \pi)$ is assumed to be known, potentially from calibration of the model.

\begin{theorem}\label{thm:dynamicDR}
Consider models where $p_0$ and $\{\fA_t\}_{t>0}$ satisfy Definition \ref{def:dynamicgenInd}. In a DR-expectation setting, a dynamic penalty $\kappa$ satisfying \eqref{eq:recursivePen} can be obtained from the recursion
\[\begin{split}\kappa_t(p_t) =\inf_{\fA_t}\bigg\{\inf_{p_{t-1}\in (p_t)^{\ofA_t}}\Big\{\kappa_{t-1}(p_{t-1}) &+ \gamma_t(\fA_t; \{Y_s\}_{s<t}) \\&-\log\Big(c_\fA (Y_t; A^\fA_t p_{t-1})\Big)\Big\}\bigg\}-m_t,\end{split}\]
with initial value $\kappa_0(p)=\pi(p)$, where $m_t$ is chosen to ensure $\inf_{p\in S_N^+}\kappa_t(p)=0$ for all $t$.
\end{theorem}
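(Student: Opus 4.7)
The plan is to mirror the proof of Theorem~\ref{thm:fixDR}, replacing the static generator $\fA$ by a $\CY$-adapted process $\{\fA_s\}_{s\ge1}$ and exploiting the separability of the prior across time to reduce the bookkeeping variable from $(p,\fA)$ back down to $p$ alone. First I would write down the $\mathcal{Q}|\mathbf{y}_t$-divergence explicitly: under Definition~\ref{def:dynamicgenInd} the log of the prior density at $(p_0,\fA_1,\dots,\fA_t)$ is $-\pi(p_0)-\sum_{s=1}^t \gamma_s(\fA_s;\{Y_r\}_{r<s})$, and Proposition~\ref{prop:LikePart} gives the restricted likelihood $\prod_{s=1}^t c_{\fA_s}(Y_s;A_s^{\fA_s}p_{s-1}^{\fA,p_0})$. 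Adding these and normalising to have minimum zero (with the constant absorbed into the running $m_t$), I obtain an analogue of Lemma~\ref{lem:divergenceDRfilter}:
\[
\alpha_{\mathcal{Q}|\mathbf{y}_t}(\bQ) = \pi(p_0) + \sum_{s=1}^t\Big\{\gamma_s(\fA_s;\{Y_r\}_{r<s})-\log c_{\fA_s}\bigl(Y_s;A_s^{\fA_s}p_{s-1}^{\fA,p_0}\bigr)\Big\} + \text{const}.
\]

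Next I would establish that the $\kappa_t$ produced by the stated recursion satisfies \eqref{eq:recursivePen}. Arguing as in Lemma~\ref{lem:staticUPK}, the minimal penalty consistent with $X_t|\CY_t\sim p$ is the infimum of $\alpha_{\mathcal{Q}|\mathbf{y}_t}(\bQ)$ over all $(p_0,\{\fA_s\}_{s\le t})$ for which the filtering recursion \eqref{eq:basicfilter} driven by $(p_0,\{\fA_s\}_{s\le t})$ and the observed $\mathbf{y}_t$ lands on $p$ at time $t$. Using Definition~\ref{defn:backprojection} applied pathwise, this set of admissible initial data is precisely the collection of pairs $(p_0,\{\fA_s\})$ with $p_0\in(p)_t^{(\{\ofA_s\},0)}$. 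Hence by construction
\[
\kappa_t(p) = \inf_{\{\fA_s\}_{s\le t}}\inf_{p_0\in (p)_t^{(\{\ofA_s\},0)}}\Big\{\pi(p_0)+\sum_{s=1}^t\bigl[\gamma_s(\fA_s;\{Y_r\}_{r<s})-\log c_{\fA_s}(Y_s;A_s^{\fA_s}p_{s-1}^{\fA,p_0})\bigr]\Big\}-m_t,
\]
and Lemma~\ref{lem:horizonirrelevant} ensures the horizon chosen in the divergence is immaterial.

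To derive the one-step recursion I would split the joint infimum into the last time step and the remainder. The key observation is that under Definition~\ref{def:dynamicgenInd} the prior penalty $\gamma_t(\fA_t;\{Y_r\}_{r<t})$ depends only on observations strictly before $t$, so conditional on $\mathbf{y}_t$ it is a deterministic function of $\fA_t$ alone; likewise the log-likelihood term at time $t$ involves only $\fA_t$, $p_{t-1}$ and $Y_t$. Consequently the cumulative sum over $s\le t$ separates cleanly into the sum over $s\le t-1$ (which, once the admissible backward trajectories are repackaged via Definition~\ref{defn:backprojection}, is exactly the quantity whose infimum gives $\kappa_{t-1}(p_{t-1})+m_{t-1}$) plus the $s=t$ contribution. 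Interchanging infima and renormalising the constants into $m_t$ gives the stated recursion. The initial condition $\kappa_0(p)=\pi(p)$ is immediate, since at $t=0$ there are no observations and no generators to penalise.

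The main obstacle is the legitimacy of the interchange of infima and the identification of the inner infimum with $\kappa_{t-1}$: one has to check that the backward set $(p_t)^{(\{\ofA_s\},0)}$ factors correctly as $(p_{t-1})^{(\{\ofA_s\}_{s<t},0)}$ with $p_{t-1}$ ranging over $(p_t)^{\ofA_t}$, so that the parametrisation of the trajectories decomposes in a way compatible with dynamic programming. This is a routine but delicate consequence of the recursive definition of $(\cdot)^{(\ofA,s)}$ together with the Markov structure of the filtering recursion, and once it is in place the Bellman-style step is standard. The Borel-measurability requirements needed to make the pathwise infima meaningful are guaranteed by the parametric continuity assumption recorded in the remark after Definition~\ref{defn:DRexp}.
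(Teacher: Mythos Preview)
Your proposal is correct and follows essentially the same route as the paper: compute the divergence from the separable prior and Proposition~\ref{prop:LikePart}, reparameterize in terms of the terminal filter state to obtain $\kappa_t$ as an infimum over $(p_0,\{\fA_s\}_{s\le t})$, and then peel off the last step via dynamic programming. If anything you are more explicit than the paper, which dispatches the recursion in a single sentence (``the usual dynamic programming arguments''), whereas you spell out the factoring of the backward sets and the interchange of infima that makes the Bellman step go through.
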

\begin{proof}
Using the same logic as in Lemma \ref{lem:divergenceDRfilter}, and using Lemma \ref{lem:horizonirrelevant} to restrict our horizon to $t$, we observe that the divergence, for a model $\bQ$ with generator $\fA = \{A_s,C_s\}_{s>0}$ is given by 
\[\alpha_{\mathcal{Q}|\mathbf{y}_t}(\bQ) =  \pi(p_0) +\sum_{0<s\leq t}\bigg(-\log\Big(c_\fA (Y_s; A_s p_{s-1}^{\fA, p_0})\Big) + \gamma_s(\fA_s;\{Y_r\}_{r<s})\bigg) -m_t,\]
where $p_s^{\fA, p_0}$ is the corresponding solution to the filtering equations, and $m_t$ is a normalizing constant.

As in the static generator case, we can then reparameterize in terms of the terminal value of the filter, and notice that we are only interested in the minimal penalty for a given $p_t$. Comparing with \eqref{eq:recursivePen}, we have
\[
\begin{split}\kappa_t(p) = \inf_{\fA\in \bA}\bigg\{&\inf_{p_0\in (p)_t^{(\ofA, 0)}} \bigg\{\pi(p_0) \\
&+\sum_{0<s\leq t}\bigg(-\log\Big(c_\fA (Y_s; A^\fA_s p_{s-1}^{\fA, p_0})\Big) + \gamma_s(\fA_s;\{Y_r\}_{r<s})\bigg)\bigg\}\bigg\}-m_t.\end{split}
\]
The recursion then follows by the usual dynamic programming arguments.
\end{proof}

\begin{remark}
We expect that there will be less difference between the dynamic uncertain prior and dynamic DR-expectation settings than between the static uncertain prior and static DR-expectation settings. This is because there is only limited learning possible in the dynamic DR-expectation, as the underlying generator is independently given at every time, so the DR-expectation has only one value with which to infer its behaviour. This increases the relative importance of the prior term $\gamma$, which describes our understanding of typical values of the generator. In practice, the key distinction between the dynamic DR-expectation and uncertain prior models appears to be when the initial penalty is near zero -- in this case, the DR-expectation regularizes the initial state quickly, while the uncertain prior model may remain near zero indefinitely.
\end{remark}

\begin{example}
In the setting of Example \ref{example1}, as the dynamics are perfectly known, there is again no difference between the dynamic and static generator DR-expectation cases.
\end{example}

\begin{remark}
In these sections, we have considered the case where the generator is either dynamic or static. Of course, further variations can be had, by allowing the generator to have some parts static and others dynamic, or to be  dynamic but with penalty determined by an unknown static parameter. These perturbations give this approach a high degree of flexibility for practical modelling.
\end{remark}

\section{Expectations of the future}\label{sec:consist}

 The nonlinear expectations considered above do not consider how the future will evolve. In particular, we have focussed our attention on calculating $\CE_{\mathcal{Q}|\mathbf{y}_t}(\phi(X_t))$, that is, on calculating the expectation of functions of the \emph{current} hidden state. In other words, we can consider our nonlinear expectation as a mapping
 \[\CE_{\mathcal{Q}|\mathbf{y}_t}: L^\infty(\sigma(X_t)\otimes \CY_t) \to L^\infty(\CY_t).\]
 
 If we wish to calculate expectations of \emph{future} states, then we may wish to consider doing so in a filtration-consistent manner. This is of particular importance when considering optimal control problems. 

 \begin{definition}\label{defn:dynamicnonlinearexpectation}
  For a fixed horizon $T>0$, suppose that for each $t< T$ we have a mapping $\CE(\cdot|\CY_t):L^\infty(\CY_T) \to L^\infty(\CY_t)$. We say that $\CE$ is a $\CY$-consistent convex expectation if $\CE(\cdot|\CY_t)$ satisifes the following assumptions, analogous to those above,
 \begin{itemize}
  \item Strict Monotonicity: for any $\xi_1, \xi_2\in L^\infty(\CY_T)$, if $\xi_1\geq \xi_2$ a.s. then $\CE(\xi_1|\CY_t) \geq \CE(\xi_2|\CY_t)$ a.s. and if in addition $\CE(\xi_1|\CY_t)=\CE(\xi_2|\CY_t)$ then $\xi_1=\xi_2$ a.s.
  \item Constant triviality: for  $K\in L^\infty(\CY_t)$, $\CE(K|\CY_t)=K$.
  \item Translation equivariance: for any $K\in L^\infty(\CY_t)$, $\xi\in L^\infty(\CY_T)$, $\CE(\xi+K|\CY_t)= \CE(\xi|\CY_t)+K$.
  \item Convexity: for any $\lambda\in [0,1]$, $\xi_1, \xi_2\in L^\infty(\CY_T)$, 
 \[\CE(\lambda \xi_1+ (1-\lambda) \xi_2|\CY_t) \leq \lambda \CE(\xi_1|\CY_t)+ (1-\lambda) \CE(\xi_2|\CY_t)\]
  \item Lower semicontinuity: For a sequence $\{\xi_n \}_{n\in\bN}\subset L^\infty(\CY_T)$ with $\xi_n \uparrow \xi \in L^\infty(\CY_T)$ pointwise, $\CE(\xi_n|\CY_t) \uparrow \CE(\xi|\CY_t)$ pointwise for every $t<T$.
 \end{itemize}
 and the additional asssumptions
 \begin{itemize}
  \item $\{\CY_t\}_{t\ge 0}$-consistency: for any $s<t<T$, any $\xi\in L^\infty(\CY_T)$,
 \[\CE(\xi|\CY_s) = \CE(\CE(\xi|\CY_t)|\CY_s).\]
 \item Relevance: for any $t<T$, any $A\in \CY_t$, $\CE(I_A\xi|\CY_t) = I_A \CE(\xi|\CY_t)$.
 \end{itemize}
 \end{definition}
 The assumption of $\CY$-consistency is sometimes simply called recursivity, time consistency or dynamic consistency (and is closely related to the validity of the dynamic programming principle), however, it is important to note that this depends on the choice of filtration. In our context, consistency with the \emph{observation} filtration $\CY$ is natural, as this describes the information available for us to make decisions.

\begin{remark}Definition \ref{defn:dynamicnonlinearexpectation} is equivalent to considering a lower semicontinuous convex expectation, as in Definition \ref{defn:nonlinearexpectation} and assuming that for any $\xi\in L^\infty(\CY_T)$ and any $t<T$, there exists a random variable $\xi_t$ such that $\CE(I_A \xi) = \CE(I_A \xi_t)$ for all $A\in \CY_t$. In this case, one can define $\CE(\xi|\CY_t) = \xi_t$ and verify that the definition given is satisfied (see F\"ollmer and Schied \cite{Follmer2002}).
\end{remark}

Much work has been done on the construction of dynamic nonlinear expectations (see for example Epstein and Schneider \cite{Epstein2003}, Duffie and Epstein \cite{Duffie1992}, El Karoui, Peng and Quenez \cite {El1997}, Cohen and Elliott \cite{Cohen2008c}, and references therein). In particular, there have been close relations drawn between these operators and the theory of BSDEs (for a setting covering the discrete-time examples we consider here, see \cite{Cohen2008c,Cohen2009a}).

\begin{remark}The importance of $\CY$-consistency is twofold: First, it guarantees that, when using a nonlinear expectation to construct the value function for a control problem, an optimal policy will be consistent in the sense that (assuming an optimal policy exists) a policy which is optimal at time zero will remain optimal in the future. Secondly, $\{\CY_t\}_{t\ge 0}$-consistency allows the nonlinear expectation to be calculated recursively, working backwards from a terminal time. This leads to a considerable simplification numerically, as it avoids a curse of dimensionality in intertemporal control problems. \end{remark}

\begin{remark}
One issue in our setting is that our lack of knowledge does not simply line up with the arrow of time -- we are unaware of events which occurred in the past, as well as those which are in the future. This leads to delicacies in questions of dynamic consistency. Conventionally, this has been often been considered in a setting of  `partially observed control', and these issues are resolved by taking the filter state $p_t$ to play the role of a state variable, and solving the corresponding
`fully observed control problem' with $p_t$ as underlying. In our context, we do not know the value of $p_t$, instead we have the (even higher dimensional) penalty function $\kappa_t$ (or worse, $K_t$) as a state variable (taking values in the space of functions on $S_N^+$).
\end{remark}

In this section, we will outline how this perspective can provide a dynamically consistent extension of our expectations,  and how enforcing dynamic consistency will modify our perception of risk. For this and the remainder of the paper, we will focus our attention on the dynamic-generator DR-expectation framework. The corresponding theory in the dynamic-generator uncertain-prior setting can be obtained by simply removing the relevant likelihood term whenever it appears.

\subsection{Asynchronous expectations}

We will focus our attention on constructing a dynamically consistent nonlinear expectation for random variables in $L^\infty(\sigma(X_T)\otimes \CY_T)$, given observations up to times $t<T$. Given our earlier analysis, we already have a map 
\[\CE_{\mathcal{Q}|\mathbf{y}_T}: L^\infty(\sigma(X_T)\otimes \CY_T) \to L^\infty(\CY_T).\] We therefore need only to construct  a $\CY$-consistent family of maps 
\[\oCE(\cdot|\CY_t): L^\infty(\CY_T) \to L^\infty(\CY_t),\] which we can extend by composition with $\CE_{\mathcal{Q}|\mathbf{y}_T}$ to be defined on our space of interest $L^\infty(\sigma(X_T)\otimes \CY_T)$.

As we are in discrete time, we can construct a $\CY$-consistent family through recursion. Therefore, the key question is how we construct a nonlinear expectation over one step. The definition of the DR-expectation can be applied to generate these one-step expectations in a natural way.

Recall that, as $\CY$ is generated by $Y$, any $\CY_{t+1}$-measurable function $\xi$ is simply a function of $\{Y_s\}_{s\leq t+1}$ so we can write
\begin{equation}\label{eq:xiMarkovDef}
 \xi(\omega) = \hat\xi(Y_{t+1}, \{Y_s\}_{s\leq t}).
\end{equation}
For any conditionally Markov measure $Q$,  if $Q$ has generator $\{\fA_t\}_{t> 0}$, it follows that
\[\bE_\bQ[\xi|\CY_t]= \int_{\bR^d} \hat\xi(y, \{Y_s\}_{s\leq t})   c_{\fA_{t+1}}(dy; A^\fA_t p_t ).\]

\begin{lemma}
For a dynamic DR-expectation, the one-step expectation (i.e. for $\CY_{t+1}$-measurable $\xi_{t+1}$) can be written
\begin{align*} 
\CE(\xi_{t+1}|\CY_{t})&:= \CE_{\mathcal{Q}|\mathbf{y}_t}(\xi_{t+1})\\
& = \sup_{p_t\in S_N^+, \fA_t\in \CA}\Big\{ \int_{\bR^d} \hat\xi_{t+1}(y, \{Y_s\}_{s\leq t})   c_{\fA_{t+1}}(dy; A^\fA_t p_t )\\&\qquad\qquad-\Big(\frac{1}{k}(\kappa_t(p_t)+\gamma_t(\fA_t; \{Y_s\}_{s<t})\Big)^{k'} \Big\},
\end{align*}
where $\hat\xi$ is as in \eqref{eq:xiMarkovDef}.
\end{lemma}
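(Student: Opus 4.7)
The plan is to unpack the definition of $\CE_{\mathcal{Q}|\mathbf{y}_t}$ and reparameterize the supremum over $\bQ\in\mathcal{Q}$ in terms of the pair $(p_t,\fA_{t+1})$. First I would invoke Lemma~\ref{lem:horizonirrelevant} to restrict attention to the divergence of $\bQ\|_{\F_{t+1}}$, since $\xi_{t+1}$ is $\CY_{t+1}$-measurable and this is the minimal horizon required. Next I would expand the conditional expectation: using the representation $\xi_{t+1}=\hat\xi(Y_{t+1},\{Y_s\}_{s\leq t})$, conditioning first on $X_{t+1}$, and applying $X_{t+1}|\CY_t\sim A^{\fA}_{t+1}p_t^{\fA,p_0}$ together with $Y_{t+1}|X_{t+1}\sim c_{\fA_{t+1}}(\cdot;X_{t+1})d\mu$, we obtain
\[\bE_{\bQ}[\xi_{t+1}|\CY_t]=\int_{\bR^d}\hat\xi(y,\{Y_s\}_{s\leq t})\,c_{\fA^{\bQ}_{t+1}}\bigl(y;A^{\fA}_{t+1}p_t^{\fA^{\bQ},p_0^{\bQ}}\bigr)\,d\mu(y),\]
which is the integral appearing in the claim.

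For the divergence itself, by the reasoning used in the proof of Theorem~\ref{thm:dynamicDR} (the extension of Lemma~\ref{lem:divergenceDRfilter} to the dynamic-generator case), we have
\[\alpha_{\mathcal{Q}|\mathbf{y}_t}(\bQ\|_{\F_{t+1}})=\pi(p_0)+\sum_{s=1}^{t+1}\gamma_s\bigl(\fA_s;\{Y_r\}_{r<s}\bigr)-\sum_{s=1}^{t}\log c_{\fA_s}\bigl(Y_s;A_s p_{s-1}^{\fA,p_0}\bigr)-m,\]
with $m$ a normalizing constant. Crucially, $\fA_{t+1}$ appears only through the prior term $\gamma_{t+1}$, since no observation $Y_{t+1}$ is available under the conditioning $\mathbf{y}_t$; hence the $\log c$ contribution runs only up to $s=t$.

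The key step is to split the supremum over $\bQ$ (parameterized by $p_0$ and $\{\fA_s\}_{s\leq t+1}$) as
\[\sup_{p_0,\,\{\fA_s\}_{s\leq t+1}}\;=\;\sup_{p_t\in S_N^+,\,\fA_{t+1}}\;\sup_{p_0,\,\{\fA_s\}_{s\leq t}:\,p_t^{\fA,p_0}=p_t}.\]
The expectation above depends on $(p_0,\{\fA_s\}_{s\leq t})$ only through the realized $p_t$, while the kernel $x\mapsto(x/k)^{k'}$ is monotone increasing on $[0,\infty)$. Consequently the inner supremum reduces to an infimum of the divergence over the preimage $\{p_t^{\fA,p_0}=p_t\}$, and by the dynamic programming identity established in Theorem~\ref{thm:dynamicDR} this inner infimum of the purely historical contributions is exactly $\kappa_t(p_t)$. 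The residual contribution from $\fA_{t+1}$, which is $\gamma_{t+1}(\fA_{t+1};\{Y_r\}_{r\leq t})$, does not depend on the minimized variables, and combining these gives the stated formula.

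The main difficulty is bookkeeping rather than analysis: tracking the normalizing constants $m$ that arise in each application of the divergence definition, and aligning the time-subscripts with the conventions of the lemma statement (which writes $\fA_t,\gamma_t$ where the derivation naturally produces $\fA_{t+1},\gamma_{t+1}$). Once one recognizes that the historical portion of the divergence, infimized over trajectories terminating at a prescribed $p_t$, is precisely the recursive penalty $\kappa_t$ already constructed in Theorem~\ref{thm:dynamicDR}, the result follows by rearrangement.
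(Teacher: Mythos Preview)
Your proposal is correct and follows essentially the same route as the paper: both arguments compute $\bE_\bQ[\xi_{t+1}|\CY_t]$ as the stated integral, observe that the conditional expectation depends on the parameters $(p_0,\{\fA_s\}_{s\le t})$ only through the filter state $p_t$, and then minimize the divergence over these historical parameters to recover $\kappa_t(p_t)+\gamma_{t+1}(\fA_{t+1})$ via the identity established in Theorem~\ref{thm:dynamicDR}. The only cosmetic difference is that you invoke Lemma~\ref{lem:horizonirrelevant} to truncate the horizon to $t+1$ before minimizing, whereas the paper keeps the full horizon and minimizes over $\{\fA_s\}_{s\neq t+1}$ directly; your remark about the $\fA_t$ versus $\fA_{t+1}$ indexing mismatch in the statement is also apt.
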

\begin{proof}
Our DR-expectation can be written 
\[ \CE_{\mathcal{Q}|\mathbf{y}_t}(\xi_{t+1}) = \sup_{\bQ}\Big\{E_\bQ[\xi_{t+1}|\mathbf{y}]-\Big(\frac{1}{k}\alpha_{\mathcal{Q}|\mathbf{y}_t}(\bQ)\Big)^{k'} \Big\}.\]
As $\xi_{t+1}$ is $\CY_{t+1}$-measurable, rather than $\CY_t$-measurable, we cannot exploit Lemma \ref{lem:horizonirrelevant} fully, as the generator at $t+1$ is still relevant when calculating our $\CY_t$-conditional expectation. As we are in a dynamic-generator DR-expectation setting, for a measure $\bQ$ with generator $\{\fA_s\}_{0< s\le T}$ and initial hidden distribution $p_0$, we have
 \[\alpha_{\mathcal{Q}|\mathbf{y}_t}(\bQ) =  \pi(p_0) -\sum_{0<s\leq t}\log\Big(c_\fA (Y_s; A_s p_{s-1}^{\fA, p_0})\Big) +\sum_{0< s\leq T}\gamma_s(\fA_s;\{Y_r\}_{r<s}) -m_t,\]
Our expectation $\bE_\bQ[\xi_{t+1}|\CY_t]$ is independent of $\fA_s$ for $s>t+1$, and is independent of $\fA_s$ for $s\le t$ given $p_t$. Therefore, without loss of generality we can minimize this over possible values of $\{\fA_s\}_{s\neq t+1}$ and, expressing in terms of the current filter state, we obtain 
\[\inf_{\{\fA_s\}_{s\neq t+1}}\Big\{\inf_{p_0\in (p)_t^{(\{\ofA_s\}_{s<t}, 0)}}\big\{\alpha_{\mathcal{Q}|\mathbf{y}_t}(\bQ)\big\}\Big\} = \kappa_t(p_t) + \gamma_{t+1}(\fA_{t+1}),\]
where $\kappa_t$ is as constructed in Theorem \ref{thm:dynamicDR}.
Combining the conditional expectation and the penalty, this gives the desired representation.
\end{proof}

\begin{remark}
 There is a surprising form of double-counting of the penalty here. For notational simplicity, let's assume $\phi$ does not depend on $Y$. If we consider $\xi_{t+1} = \CE(\phi(X_{t+1})|\CY_{t+1})$, then we have included a penalty for the proposed model at $t+1$, that is,
\[\xi_{t+1} = \CE(\phi(X_{t+1})|\CY_{t+1}) = \sup_{p\in S_N^+}\Big\{\sum_i p^i\phi(e_i)-\Big(\frac{1}{k}(\kappa_{t+1}(p)\Big)^{k'}\Big\}\]
where $\kappa_{t+1}(p)$ is the penalty associated with the filter state at time $t+1$, which comes from the generators $(p_0, \{\fA_s\}_{s\le t+1})$.

  When we calculate $\CE(\xi_{t+1}|\CY_{t})$, we then add on the penalty $\kappa_t(p_t)+ \gamma_{t+1}(\fA_{t+1})$, which again penalizes unreasonable values of $p_t$ and the generator $\fA_{t+1}$. This `double counting' of the penalty corresponds to us including both our `expected' uncertainty at time $t+1$, and also our `uncertainty at $t$ about our uncertainty at $t+1$'. 
\end{remark}

\begin{remark}
In the case where we considered a DynamicUP model, the equations would be identical, with the corresponding choice of $\gamma$. This is because the DR and uncertain prior models differ only through the incorporation of learning through the log-likelihood term, which is not a consideration when it comes to evaluating our future expectations.
\end{remark} 

\begin{remark}
If we take a StaticDR model, then the equations vary as one might expect:
\[ \CE(\xi_{t+1}|\CY_{t}) = \sup_{p_t\in S_N^+, \fA\in\CA}\Big\{\int_{\bR^d} \hat\xi(y, \{Y_s\}_{s\leq t})   c_{\fA_{t+1}}(dy; A^\fA_t p_t )-\Big(\frac{1}{k}K_t(p_t, \fA)\Big)^{k'}\Big\}.\]
Similarly for a StaticUP model. However, one should be careful in this setting, as the recursively-defined nonlinear expectation will consider models which allow the generator $\fA$ to vary through time, even though this does not form part of the original static generator framework.
\end{remark}

Given this recursion, we can now define the nonlinear expectation at every time.

\begin{definition}
 The dynamically consistent expectation of $\phi(X_T, \{Y_t\}_{t\le T})$ (for either the static or dynamic generator cases, and either the uncertain prior or DR-expectation models), is given by the recursion
\[\oCE(\phi(X_T, \{Y_t\}_{t\le T})|\CY_t):=\xi_t = \CE(\xi_{t+1}|\CY_t)\]
with $\xi_T = \CE_{\mathcal{Q}|\mathbf{y}_T}(\phi(X_T, \{Y_t\}_{t\le T}))$. As $\CY_0$ is trivial, we identify 
\[\oCE(\phi(X_T, \{Y_t\}_{t\le T})) = \oCE(\phi(X_T, \{Y_t\}_{t\le T})|\CY_0).\]
\end{definition}
 Note that $\xi_t$ is $\CY_t$-measurable for all $t<T$. 

\begin{remark}
 As we have defined $\xi$ using recursion, and our DR-expectation (and uncertain prior expectation) are convex, it is easy to verify that the map $\phi(X_T, \{Y_t\}_{t\le T})\mapsto \xi_t$ is a $\CY$-consistent convex expectation.
\end{remark}

\subsection{Recall of BSDE theory}

While it is useful to give a recursive definition of our nonlinear expectation, a better understanding of its dynamics is of practical importance. In what follows we will, for the dynamic generator case, consider the corresponding BSDE theory, assuming that $Y_t$ can take only finitely many values, as in \cite{Cohen2008c}. We will now present the key results of \cite{Cohen2008c}, in a simplified setting.

In what follows, we suppose that $Y$ takes $d$ values, which we associate with the standard basis vectors in $\bR^{d}$. For simplicity, we write $\bone$ for the vector in $\bR^{d}$ with all components $1$.

\begin{definition}
Write $\bar\bP$ for a probability measure such that $\{Y_t\}_{t\ge 0}$ is an iid sequence, uniformly distributed over the $d$ states, and $M$ for the $\bar\bP$-martingale difference process $Y_t-d^{-1}\bone$. As in \cite{Cohen2008c}, $M$ has the property that any $\CY$-adapted $\bar\bP$-martingale $L$ can be represented by $L_t= L_0 + \sum_{0\le s<t}Z_s M_{s+1}$ for some $Z$ (and $Z$ is unique up to addition of a multiple of $\bone$). 
\end{definition}

\begin{remark}
 The construction of $Z$ in fact also shows that, if $L$ is written $L_t= \tilde L(Y_1,...,Y_{t-1}, Y_t)$, then $e_i^\top Z_t = L(Y_1,..., Y_{t-1}, e_i)$ for every $i$ (up to addition of a multiple of $\bone$).
\end{remark}

We can then define a BSDE (Backward Stochastic Difference Equation) with solution $(\xi, Z)$:
\begin{equation}\label{eq:BSDEgen}
\xi_t(\omega) - \sum_{t\leq u< T} f(\omega, u, \xi_u(\omega), Z_u(\omega)) + \sum_{t\leq u< T} Z_u(\omega) M_{u+1}(\omega) = \xi_T(\omega),
\end{equation}
where $T$ is a finite deterministic terminal time, $f$ a $\CY$-adapted map $F:\Omega\times \{0,...,T\} \times \mathbb{R} \times \mathbb{R}^{d}\rightarrow \mathbb{R}$, and  $\xi_T$ a given $\mathbb{R}$-valued $\mathcal{Y}_T$-measurable terminal condition. For simplicity, we henceforth omit the $\omega$ argument of $\xi, Z$ and $M$.

The general existence and uniqueness result for BSDEs in this context is as follows:
\begin{theorem}\label{thm:BSDEExist}
Suppose $f$ is such that the following two assumptions hold:
\begin{enumerate}[(i)]
	\item For any $\xi$, if $Z^1= Z^2+k\bone$ for some $k$, then $f(\omega,t, \xi_t, Z^1_t) = f(\omega, t, \xi_t, Z^2_t)$, $\bar\bP$-a.s. for all $t$.
	\item For any $z\in\mathbb{R}^{d}$, for all $t$, for $\bar\bP$-almost all $\omega$, the map 
	\[\xi\mapsto \xi-f(\omega, t, \xi, z)\] is a bijection $\mathbb{R}\rightarrow \mathbb{R}$.
\end{enumerate}
Then for any terminal condition $\xi_T$ essentially bounded, $\mathcal{Y}_T$-measurable, and with values in $\mathbb{R}$, the BSDE (\ref{eq:BSDEgen}) has a $\CY$-adapted solution $(\xi, Z)$. Moreover, this solution is unique up to indistinguishability for $\xi$ and indistinguishability up to addition of multiples of $\bone$ for $Z$.
\end{theorem}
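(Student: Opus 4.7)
The plan is to construct the solution by backward induction on $t$, starting from the given terminal condition $\xi_T$. Assume inductively that $\xi_{t+1}$ has been built as a $\CY_{t+1}$-measurable random variable. The BSDE at time $t$ reduces to the one-step equation
\[
\xi_t - f(\omega,t,\xi_t,Z_t) + Z_t M_{t+1} = \xi_{t+1},
\]
whose left-hand side splits into a $\CY_t$-measurable piece and a martingale increment against $M_{t+1}$. This decomposition suggests producing $(\xi_t, Z_t)$ in two successive steps.

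First I would find $Z_t$. Conditioning the one-step equation on $\CY_t$ and using that $\bE_{\bar\bP}[M_{t+1}|\CY_t]=0$ gives $\xi_t - f(\omega,t,\xi_t,Z_t) = \bE_{\bar\bP}[\xi_{t+1}|\CY_t]$; subtracting then yields
\[
Z_t M_{t+1} = \xi_{t+1} - \bE_{\bar\bP}[\xi_{t+1}|\CY_t].
\]
The right-hand side is the single martingale increment at time $t+1$ of the $\CY$-adapted $\bar\bP$-martingale $s\mapsto \bE_{\bar\bP}[\xi_{t+1}|\CY_s]$, so the stated martingale representation property of $M$ supplies a $\CY_t$-measurable $Z_t$ solving this equation, unique up to addition of a multiple of $\bone$.

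Next I would find $\xi_t$. Assumption (i) makes $f(\omega,t,\xi,Z_t)$ well defined independently of the representative of $Z_t$, and assumption (ii) says $\xi\mapsto \xi - f(\omega,t,\xi,Z_t(\omega))$ is a bijection $\bR\to\bR$ for $\bar\bP$-a.e.\ $\omega$. I would therefore define $\xi_t(\omega)$ pointwise as the unique real preimage of $\bE_{\bar\bP}[\xi_{t+1}|\CY_t](\omega)$ under this bijection. The one real measurability subtlety is whether this pointwise inverse is $\CY_t$-measurable, but in our discrete setting $Y_1,\ldots,Y_t$ take only finitely many values, so $\CY_t$ has finitely many atoms; both $\bE_{\bar\bP}[\xi_{t+1}|\CY_t]$ and $Z_t$ are constant on each atom, hence so is $\xi_t$, and $\CY_t$-measurability is automatic. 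This is the point where the discreteness of the filtration is genuinely exploited and where, in a more general setting, one would need a measurable-selection argument, so it is the main potential obstacle.

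For uniqueness, I would run the same induction on two candidate solutions $(\xi,Z)$ and $(\xi',Z')$: given $\xi_{t+1}=\xi_{t+1}'$ a.s., the martingale-representation step forces $Z_t = Z_t' + k_t\bone$ for some $\CY_t$-measurable $k_t$, then assumption (i) makes $f$ the same on both representatives and assumption (ii) forces $\xi_t = \xi_t'$ a.s.; the base case is $\xi_T=\xi_T'$. Iterating down to $t=0$ yields indistinguishability of $\xi$ and indistinguishability of $Z$ up to addition of multiples of $\bone$, exactly as claimed.
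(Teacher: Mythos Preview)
The paper does not actually prove this theorem: it is stated in the ``Recall of BSDE theory'' subsection as a key result quoted from \cite{Cohen2008c}, without an accompanying argument. So there is no proof in the paper to compare against.

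That said, your proposal is correct and is precisely the standard backward-induction construction used in the discrete-time BSDE literature (and in \cite{Cohen2008c} in particular): at each step one first extracts $Z_t$ from the martingale representation of $\xi_{t+1}-\bE_{\bar\bP}[\xi_{t+1}|\CY_t]$, then uses assumption (ii) to invert $\xi\mapsto \xi-f(\omega,t,\xi,Z_t)$ and recover $\xi_t$, with assumption (i) guaranteeing the answer does not depend on the representative of $Z_t$. Your handling of measurability via the finiteness of the atoms of $\CY_t$ is appropriate in this setting, and the uniqueness argument by the same induction is the right one.
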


In this setting, we also have a comparison theorem:

\begin{theorem}\label{thm:CompThm}
Consider two discrete time BSDEs as in (\ref{eq:BSDEgen}), corresponding to coefficients $f^1, f^2$ and terminal values $\xi^1_T, \xi^2_T$. Suppose the conditions of Theorem \ref{thm:BSDEExist} are satisfied for both equations, let $(\xi^1, Z^1)$ and $(\xi^2, Z^2)$ be the associated solutions. Suppose the following conditions hold:
\begin{enumerate}[(i)]
	\item $\xi^1_T\geq \xi^2_T$ $\bar\bP$-a.s.
	\item $\bar\bP$-a.s.,  for all times $t$ and every $\xi\in \bR$ and $z\in \bR^{d}$, \[f^1(\omega, t, \xi, z) \geq f^2(\omega, t, \xi, z).\]
	\item $\bar\bP$-a.s., for all $t$, $f^1$ satisfies
	\[f^1(\omega, t, \xi_t^2, Z_t^1) - f^1(\omega, t, \xi_t^2, Z_t^2)\geq\min_{j\in \mathbb{J}_t}\{(Z^1_t-Z^2_t)(e_j-d^{-1}\bone)\}.\]
	\item $\bar\bP$-a.s., for all $t$ and all $z\in \bR^{d}$, $\xi\mapsto \xi-f^1(\omega, t, \xi,z)$ is an increasing function.
\end{enumerate}
 It is then true that $\xi^1 \geq \xi^2$ $\bar\bP$-a.s. A driver $f^1$ satisfying (iii) and (iv) will be called `balanced'.
\end{theorem}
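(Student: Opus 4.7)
The plan is to proceed by backward induction on $t$. The base case $t=T$ is exactly condition~(i). For the inductive step, fix $t<T$, assume $\xi^1_s \geq \xi^2_s$ $\bar\bP$-a.s.\ for all $s>t$, and aim to show $\xi^1_t \geq \xi^2_t$ $\bar\bP$-a.s. Writing $\Delta\xi_s = \xi^1_s - \xi^2_s$, $\Delta Z_t = Z^1_t - Z^2_t$, and $\Delta f_t = f^1(\omega, t, \xi^1_t, Z^1_t) - f^2(\omega, t, \xi^2_t, Z^2_t)$, the one-step form of the BSDE reads
\[\Delta\xi_t = \Delta f_t + \Delta\xi_{t+1} - \Delta Z_t M_{t+1}.\]
Since the left-hand side is $\CY_t$-measurable while $\bar\bP$ assigns positive probability to each value $e_j$ of $Y_{t+1}$, for $\bar\bP$-a.e.\ path up to time $t$ this identity holds simultaneously for every $y \in \{e_1,\dots,e_d\}$ substituted for $Y_{t+1}$, giving
\[\Delta\xi_t = \Delta f_t + \Delta\xi_{t+1}(y) - \Delta Z_t(y - d^{-1}\bone).\]

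The key algebraic step is to evaluate this identity at $y^* = e_{j^*}$, where $j^* \in \arg\min_{j\in\mathbb{J}_t}\bigl\{\Delta Z_t(e_j - d^{-1}\bone)\bigr\}$, and to telescope $\Delta f_t = A + B + C$ with $A = f^1(t,\xi^1_t, Z^1_t) - f^1(t,\xi^2_t, Z^1_t)$, $B = f^1(t,\xi^2_t, Z^1_t) - f^1(t,\xi^2_t, Z^2_t)$, and $C = f^1(t,\xi^2_t, Z^2_t) - f^2(t,\xi^2_t, Z^2_t)$. Then $C\geq 0$ by~(ii); $B \geq \Delta Z_t(y^* - d^{-1}\bone)$ by~(iii), so the martingale-difference term $-\Delta Z_t(y^* - d^{-1}\bone)$ is absorbed into $B$ to yield a nonnegative contribution; and the inductive hypothesis, applicable pointwise in the $y^*$-slot thanks to the full support of $\bar\bP$, gives $\Delta\xi_{t+1}(y^*) \geq 0$. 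Combining these gives
\[\Delta\xi_t \geq A = f^1(t,\xi^1_t, Z^1_t) - f^1(t,\xi^2_t, Z^1_t).\]

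To close the argument I invoke condition~(iv) together with assumption~(ii) of Theorem~\ref{thm:BSDEExist}: an increasing bijection on $\bR$ is necessarily strictly increasing, so the map $\xi \mapsto \xi - f^1(t,\xi, Z^1_t)$ is strictly increasing. If $\xi^1_t < \xi^2_t$ were to hold on a set of positive $\bar\bP$-measure, then on that set strict monotonicity would give $\Delta\xi_t < A$, contradicting the inequality just derived. Hence $\xi^1_t \geq \xi^2_t$ $\bar\bP$-a.s., completing the induction. I expect the main subtlety to be the pointwise-in-$Y_{t+1}$ reading of the one-step identity and checking that the minimizer $y^*$ in (iii) really is a value assumed by $Y_{t+1}$ under $\bar\bP$ (so that the identity is available at $y^*$); both rely on $\bar\bP$ having full support on $\{e_1,\dots,e_d\}$. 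Once this is handled, the telescoping inequality and the strict-monotonicity contradiction are bookkeeping.
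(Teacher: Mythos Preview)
The paper does not actually prove this theorem: it is stated in the ``Recall of BSDE theory'' subsection as one of the key results imported from \cite{Cohen2008c}, with no proof given in the present paper. Your backward-induction argument is correct and is essentially the standard proof one finds for such finite-state discrete-time comparison theorems (and is the approach taken in \cite{Cohen2008c}): evaluate the one-step identity at the value of $Y_{t+1}$ that realises the minimum in condition~(iii), telescope $\Delta f_t$ into the three pieces $A+B+C$, use (ii) and (iii) to absorb $C$ and the $\Delta Z_t$-term, use the inductive hypothesis to dispose of $\Delta\xi_{t+1}(y^*)$, and then invoke the strict monotonicity of $\xi\mapsto\xi-f^1(t,\xi,Z^1_t)$ (increasing by (iv), injective by the bijection assumption of Theorem~\ref{thm:BSDEExist}) to conclude $\xi^1_t\geq\xi^2_t$ from $g(\xi^1_t)\geq g(\xi^2_t)$.

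One small presentational point: rather than arguing by contradiction at the end, you can simply note that a strictly increasing bijection $g:\bR\to\bR$ has a strictly increasing inverse, so $g(\xi^1_t)\geq g(\xi^2_t)$ immediately yields $\xi^1_t\geq\xi^2_t$. Your identification of the only genuine subtlety---that under $\bar\bP$ the random variable $Y_{t+1}$ is uniform on $\{e_1,\dots,e_d\}$ independently of $\CY_t$, so the one-step identity holds simultaneously for every $y$ and in particular at $y^*$---is exactly right.
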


Finally, we also know that all dynamically consistent nonlinear expectations can be represented through BSDEs:
\begin{theorem}
 The following two statements are equivalent.
 \begin{enumerate}[(i)]
 \item $\oCE(\cdot|\mathcal{Y}_t)$ is a $\mathcal{Y}_t$-consistent, dynamically translation invariant, nonlinear expectation 
 \item There exists a driver $f$ which is balanced, independent of $\xi$, and satisfies the normalisation condition $f(\omega, t, \xi_t, 0) = 0$, such that, for all $\xi_T$, $\xi_t = \oCE(\xi_T|\mathcal{Y}_t)$ is the solution to a BSDE with terminal condition $\xi_T$ and driver $f$.
 \end{enumerate}
 Furthermore, these two statements are related by the equation
 \[f(\omega, t, \xi, z) =\oCE(zM_{t+1}|\CY_t).\]
 \end{theorem}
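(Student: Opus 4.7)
The plan is to prove the two implications separately; the representation $f(\omega,t,\xi,z) = \oCE(zM_{t+1}|\CY_t)$ will emerge naturally from the (i)$\Rightarrow$(ii) direction and can then be checked to work in the reverse direction.

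For (ii)$\Rightarrow$(i), I would \emph{define} $\oCE(\xi_T|\CY_t) := \xi_t$, where $(\xi,Z)$ is the BSDE solution supplied by Theorem \ref{thm:BSDEExist}; its hypothesis (i) is trivial by $\xi$-independence of $f$, and hypothesis (ii) reduces to the identity shift $\xi \mapsto \xi - f(\omega,t,z)$. Each property required by Definition \ref{defn:dynamicnonlinearexpectation} can then be read off by standard BSDE manipulations: monotonicity from the comparison theorem (Theorem \ref{thm:CompThm}), whose balanced and $\xi$-monotone hypotheses are assumed; constant triviality by noting that $(\xi_u,Z_u) = (K,0)$ solves the BSDE with terminal $K\in\bR$ thanks to $f(\omega,t,\xi,0)=0$; translation equivariance by observing that, for $K \in L^\infty(\CY_t)$, the pair $(\xi_u+K, Z_u)_{u\ge t}$ solves the BSDE with terminal $\xi_T+K$ because $f$ is $\xi$-independent; $\CY$-consistency by splitting the BSDE sum at any intermediate time $s$; and lower semicontinuity by a monotone stability argument for BSDEs via the comparison theorem.

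For (i)$\Rightarrow$(ii), set $f(\omega,t,\xi,z) := \oCE(zM_{t+1}|\CY_t)$, which is $\xi$-independent, $\CY_t$-measurable, and satisfies $f(\omega,t,\xi,0)=0$ by constant triviality; well-definedness modulo $\bone$-multiples of $z$ follows from $\bone\cdot M_{t+1}=0$. Given bounded $\xi_T \in L^\infty(\CY_T)$, put $\xi_t := \oCE(\xi_T|\CY_t)$; $\CY$-consistency gives $\xi_t = \oCE(\xi_{t+1}|\CY_t)$, and the martingale representation of \cite{Cohen2008c} yields a $\CY_t$-measurable $Z_t$ with $\xi_{t+1} = \bE_{\bar\bP}[\xi_{t+1}|\CY_t] + Z_t M_{t+1}$. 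Since $\bE_{\bar\bP}[\xi_{t+1}|\CY_t] \in L^\infty(\CY_t)$, translation equivariance yields
\[
\xi_t \;=\; \bE_{\bar\bP}[\xi_{t+1}|\CY_t] + \oCE(Z_tM_{t+1}|\CY_t) \;=\; \bE_{\bar\bP}[\xi_{t+1}|\CY_t] + f(\omega,t,\xi_t,Z_t),
\]
and telescoping over $t,t+1,\dots,T$ produces the BSDE. To check that $f$ is balanced, condition (iv) of Theorem \ref{thm:CompThm} is vacuous since $\xi\mapsto \xi-f(t,z)$ is the identity shift, and (iii) follows from
\[
(z^1-z^2)M_{t+1} \;=\; (z^1-z^2)(Y_{t+1}-d^{-1}\bone) \;\ge\; \min_{j\in\mathbb{J}_t}(z^1-z^2)(e_j-d^{-1}\bone) \quad \bar\bP\text{-a.s.,}
\]
whose right-hand side is $\CY_t$-measurable, so monotonicity combined with translation equivariance peels it off to give $f(t,z^1) - f(t,z^2) \ge \min_{j\in\mathbb{J}_t}(z^1-z^2)(e_j-d^{-1}\bone)$.

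The main obstacle I anticipate is upgrading the \emph{weak} monotonicity delivered by the comparison theorem in the (ii)$\Rightarrow$(i) direction to the \emph{strict} form demanded by Definition \ref{defn:nonlinearexpectation}; this should require a separate argument exploiting strict monotonicity of $\xi \mapsto \xi - f(t,z)$ at the terminal step together with positivity of $\bar\bP(Y_{t+1}=e_j|\CY_t)$ for $j \in \mathbb{J}_t$, propagated backwards through the BSDE recursion. A secondary but pervasive subtlety is carefully tracking the non-uniqueness of $Z$ modulo $\bone$-multiples so that both the definition of $f$ and its balanced property are manifestly well-defined on the quotient, and so that the one-step identity linking $f$ to $\oCE(Z_tM_{t+1}|\CY_t)$ is independent of the chosen representative.
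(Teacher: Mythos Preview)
The paper does not actually prove this theorem: it appears in the subsection ``Recall of BSDE theory'', where the author explicitly presents ``the key results of \cite{Cohen2008c}, in a simplified setting'' and states this result (along with Theorems~\ref{thm:BSDEExist} and~\ref{thm:CompThm}) without proof. So there is no in-paper argument to compare your proposal against.

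That said, your sketch is a faithful reconstruction of the standard proof in this finite-state discrete-time setting, and in particular matches the strategy of \cite{Cohen2008c}: the (ii)$\Rightarrow$(i) direction via comparison, and the (i)$\Rightarrow$(ii) direction by defining $f(\omega,t,z):=\oCE(zM_{t+1}|\CY_t)$, using martingale representation plus translation equivariance to obtain the one-step identity, and then telescoping. Your derivation of the balanced condition~(iii) is correct: from $z^1M_{t+1}\ge z^2M_{t+1}+\min_{j\in\mathbb J_t}(z^1-z^2)(e_j-d^{-1}\bone)$ with the minimum $\CY_t$-measurable, monotonicity followed by translation equivariance gives exactly the required inequality. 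The two caveats you flag---upgrading to \emph{strict} monotonicity in the (ii)$\Rightarrow$(i) direction, and tracking the $\bone$-equivalence class of $Z$---are the genuine technical points that need care, and your suggested routes (backward induction using positivity of $\bar\bP(Y_{t+1}=e_j|\CY_t)$ for $j\in\mathbb J_t$; using $\bone\cdot M_{t+1}=0$) are the right ones.
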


\subsection{BSDEs for forward expectations}
By applying the above general theory, we can easily see that our nonlinear expectation has a representation as the solution to a particular BSDE. 
\begin{theorem}
In the dynamic generator setting, writing $\oCE(\phi(X_T, \{Y_t\}_{t\le T})|\CY_t) = \xi_t$, the dynamically consistent expectation satisfies the BSDE 
\[\xi_{t+1} = \xi_t-f(Z_t; \kappa_t) + Z_t M_{t+1}\]
where 
\[f(Z_t; \kappa_t) = \sup_{p, \fA}\Big\{\sum_i \Big(Z^i \big(c_\fA(e_i; A^\fA p)- d^{-1}\big)\Big) - \Big(\frac{\kappa_{t}(p) + \gamma_{t+1}(\fA)}{k}\Big)^{k'}\Big\}.\]
\end{theorem}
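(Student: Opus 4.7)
The plan is a direct application of the general BSDE representation theorem recalled just above, which asserts that any $\CY$-consistent, dynamically translation invariant convex expectation is the value process of a BSDE whose driver is determined by the single-step identity $f(\omega, t, z) = \oCE(z M_{t+1}|\CY_t)$. With this identification in hand, the only tasks are to verify that $\oCE$ falls under the hypotheses of the representation theorem and then to evaluate the right-hand side explicitly using the one-step DR-expectation lemma stated just before the theorem.

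First, I would verify the hypotheses. Because $\oCE$ is defined by the recursion $\xi_t = \CE(\xi_{t+1}|\CY_t)$ and each one-step DR-expectation $\CE(\cdot|\CY_t)$ is itself a lower-semicontinuous convex expectation (inheriting strict monotonicity, constant triviality, translation equivariance, convexity and lower semicontinuity from its sup-over-measures representation, Theorem~\ref{thm:penaltyexists}), the composed operator satisfies every property of Definition~\ref{defn:dynamicnonlinearexpectation}. The $\CY$-consistency and relevance are immediate from the recursive construction, as noted in the remark preceding this subsection.

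Next, I would compute the driver. Fix a $\CY_t$-measurable $\bR^d$-valued $Z_t$. The random variable $Z_t M_{t+1} = Z_t^\top(Y_{t+1} - d^{-1}\bone)$ is $\CY_{t+1}$-measurable, with the decomposition $\hat\xi_{t+1}(y, \{Y_s\}_{s\le t}) = Z_t^\top (y - d^{-1}\bone)$ in the sense of \eqref{eq:xiMarkovDef}. Applying the one-step DR-expectation lemma (with generator $\fA$ playing the role of $\fA_{t+1}$) yields
\[
\oCE(Z_t M_{t+1}|\CY_t) = \sup_{p, \fA}\Big\{\int_{\bR^d} Z_t^\top(y - d^{-1}\bone)\, c_\fA(dy; A^\fA p) - \Big(\frac{\kappa_t(p) + \gamma_{t+1}(\fA)}{k}\Big)^{k'}\Big\}.
\]
Since $Y_{t+1}$ takes values in $\{e_1, \dots, e_d\}$, the integral evaluates to $\sum_i Z_t^i(c_\fA(e_i; A^\fA p) - d^{-1})$, which produces exactly the claimed formula for $f(Z_t;\kappa_t)$.

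The main obstacle is not the formal calculation but the bookkeeping around regularity and normalization. One has to check that the supremum defines a driver which is balanced (in the sense of Theorem~\ref{thm:CompThm}) and satisfies the normalization $f(\cdot\,;\kappa_t)(0)=0$; the latter is delivered by the choice of $m_t$ in Theorem~\ref{thm:dynamicDR}, which forces $\inf_p \kappa_t(p)=0$, together with an analogous normalization of $\gamma_{t+1}$ (absorbable, if necessary, into the same constant). One must also confirm that $\omega \mapsto f(Z_t(\omega);\kappa_t)$ is $\CY_t$-measurable so that the BSDE is well-posed, but this follows from the same Filippov-type measurable selection argument already invoked in the construction of the DR-expectation itself.
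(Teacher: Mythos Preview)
Your argument is correct, but it takes a different route from the paper. The paper does not invoke the general representation theorem at all; instead it argues directly. It fixes $\xi_{t+1}$, writes it as $\hat\xi_{t+1}(Y_{t+1})$ via Doob--Dynkin, sets $Z_t^i=\hat\xi_{t+1}(e_i)$, uses the martingale representation $\xi_{t+1}-\bE_{\bar\bP}[\xi_{t+1}|\CY_t]=Z_tM_{t+1}$, and then computes
\[
\xi_t-\bE_{\bar\bP}[\xi_{t+1}|\CY_t]=\CE(\xi_{t+1}|\CY_t)-\bE_{\bar\bP}[\xi_{t+1}|\CY_t]
=\sup_{p,\fA}\Big\{\sum_y \hat\xi_{t+1}(y)\big(c_\fA(y;A^\fA p)-d^{-1}\big)-\text{penalty}\Big\},
\]
which is exactly $f(Z_t;\kappa_t)$; a rearrangement finishes. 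Your approach instead treats the representation theorem as a black box, reducing everything to the single evaluation $\oCE(zM_{t+1}|\CY_t)$, which you then compute via the one-step lemma. Both are valid: the paper's argument is more elementary and self-contained (no need to check the hypotheses of the representation theorem), while yours is structurally cleaner and makes transparent why the driver has this particular form. Your extra paragraph on balancedness, normalization and measurability is not needed for the theorem as stated (it only claims the BSDE identity holds, not well-posedness), though it does no harm.
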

\begin{proof}
As $\xi_{t+1}$ is $\CY_{t+1}$-measurable, by the Doob--Dynkin lemma there exists a $\CY_t$-measurable function $\hat\xi_{t+1}$ such that $\xi_{t+1}=\hat\xi_{t+1}(Y_{t+1})$ (we omit to write $\{Y_s\}_{s\le t}$ as an argument).  We write $Z_t$ for the vector containing each of the values of this function. From the definition of $M$, as in the proof of the martingale representation theorem in \cite{Cohen2008c}, it follows that,
\[\xi_{t+1} - \bE_{\bar\bP}[\xi_{t+1}|\CY_t] = Z_t M_{t+1}.\]
 We can then calculate
\[\begin{split}
&  \xi_t  - \bE_{\bar\bP}[\xi_{t+1}|\CY_t]\\
&= \CE(\xi_{t+1}|\CY_t) - \bE_{\bar\bP}[\xi_{t+1}|\CY_t]\\
&= \sup_{p, \fA}\Big\{\sum_y \hat\xi_{t+1}(y) c_\fA(y; A^\fA p) - \Big(\frac{\kappa_{t}(p) + \gamma_{t+1}(\fA)}{k}\Big)^{k'}\Big\} - \bE_{\bar\bP}[\xi_{t+1}|\CY_t]\\
&= \sup_{p, \fA}\Big\{\sum_y \hat\xi_{t+1}(y) \big(c_\fA(y; A^\fA p)- d^{-1}\big) - \Big(\frac{\kappa_{t}(p) + \gamma_{t+1}(\fA)}{k}\Big)^{k'}\Big\}\\
&= f(Z_t; \kappa_t).
  \end{split}
\]
The answer follows by rearrangement.
\end{proof}

We can now observe that, if we treat the function $\kappa_t: S_N^+\to \bR$ as a state variable, we obtain a `Markovian' BSDE.

\begin{theorem}
 Suppose $\phi$ and $\gamma$ are independent of $\{Y_t\}_{t\le T}$. Then the solution to the above BSDE is a functional of $\kappa_t$, that is, $\xi_{t+1}(\omega) = \Xi_{t+1}(\kappa_t(\omega,\cdot))$. 
\end{theorem}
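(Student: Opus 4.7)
The plan is to argue by backward induction on $t$ that, for each $t\le T$, the BSDE solution $\xi_t$ is a deterministic (time-indexed) functional of the state variable $\kappa_t\colon S_N^+\to\bR$; the claim as stated then follows at the next time step, since $\kappa_{t+1}$ is itself a deterministic function of $(\kappa_t,Y_{t+1})$ and $Y_{t+1}$ takes only finitely many values (so the residual $Y_{t+1}$-dependence can be absorbed into the time-indexed functional $\Xi_{t+1}$).

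For the base case $t=T$, the terminal condition $\xi_T=\CE_{\mathcal{Q}|\mathbf{y}_T}(\phi(X_T))$ admits, since $\phi$ is $Y$-free, the representation \eqref{eq:recursivePen} as $\sup_{q\in S_N^+}\{\sum_i q_i\phi(e_i)-(k^{-1}\kappa_T(q))^{k'}\}$, which is manifestly a functional $\Xi_T(\kappa_T)$ of $\kappa_T$ alone.

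For the inductive step, assume $\xi_{t+1}=\Xi_{t+1}(\kappa_{t+1})$. Theorem \ref{thm:dynamicDR} gives the one-step recursion for $\kappa$, and under the hypothesis that $\gamma_{t+1}$ is $Y$-free the only residual $Y$-dependence on its right-hand side is through $Y_{t+1}$, entering via the observation density $c_{\fA_{t+1}}(Y_{t+1};A^{\fA}p_t)$ and via the back-projection set $(p)^{\ofA_{t+1}}$ (which depends on $Y_{t+1}$ through $C(Y_{t+1})$). The normalising constant $m_{t+1}$, defined to enforce $\inf_p\kappa_{t+1}(p)=0$, is an infimum of the same expression over $p\in S_N^+$, hence likewise a function of $(\kappa_t,Y_{t+1})$. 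Therefore $\kappa_{t+1}=G_{Y_{t+1}}(\kappa_t)$ for a deterministic map $G_y$, and $\xi_{t+1}=\Xi_{t+1}(G_{Y_{t+1}}(\kappa_t))$ is a functional of $\kappa_t$ in the sense of the theorem. To close the induction on $\xi_t$, I would read off $Z_t^j:=\Xi_{t+1}(G_{e_j}(\kappa_t))$ coordinate by coordinate (valid since $Y_{t+1}$ takes only $d$ basis values under $\bar\bP$), observe that each coordinate is a functional of $\kappa_t$, and substitute into the BSDE to obtain $\xi_t=\bE_{\bar\bP}[\xi_{t+1}|\CY_t]+f(Z_t;\kappa_t)=d^{-1}\sum_j Z_t^j+f(Z_t;\kappa_t)$; since the driver $f$ depends on $\omega$ only through $Z_t$, $\kappa_t$ and $\gamma_{t+1}$ (all either functionals of $\kappa_t$ or $Y$-free), we conclude $\xi_t=\Xi_t(\kappa_t)$.

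The main obstacle is essentially bookkeeping: verifying that every object entering the one-step dynamics which could \emph{a priori} depend on $\{Y_s\}_{s<t+1}$ in fact depends on the past only through $\kappa_t$. Two points require care, namely the hypothesised $Y$-independence of $\gamma$ and $\phi$, and the fact that the normalisation $m_{t+1}$ inherits its $Y$-dependence from $Y_{t+1}$ alone (not from earlier observations). With these in hand the induction runs without further difficulty.
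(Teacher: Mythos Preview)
Your proposal is correct and follows essentially the same approach as the paper: the paper's proof is a two-sentence sketch invoking the Markovianity of $\kappa$ under the reference measure and backward induction, and you have simply fleshed out that sketch, checking in particular that the terminal value depends on $\omega$ only through $\kappa_T$ (via the $Y$-independence of $\phi$) and that the one-step recursion for $\kappa$ depends on the past only through $\kappa_t$ (via the $Y$-independence of $\gamma$, with the normalising constant handled correctly). The only point worth flagging is your reading of the index shift in the statement ($\xi_{t+1}$ as a functional of $\kappa_t$): the paper's own proof speaks of $\xi_t$ as a function of $\kappa_t$, so the cleaner invariant for the induction is $\xi_t=\Xi_t(\kappa_t)$, from which $\xi_{t+1}=\Xi_{t+1}(\kappa_{t+1})=\Xi_{t+1}(G_{Y_{t+1}}(\kappa_t))$ follows as you describe.
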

\begin{proof}
 This argument follows in the usual manner -- we observe that $\kappa$ has recursive dynamics (and furthermore, under the reference measure, $\kappa$ is Markovian) and that the terminal value of our BSDE is a function of $\omega$ only through $\kappa_T$. Consequently, we can use backward induction to construct the solution to the BSDE as a function of $\kappa_t$ (by evolving $\kappa$ forward one step in time, then solving the BSDE backward one step), and the result follows.
\end{proof}

\section{A control problem with uncertain filtering}

In this final section, we will consider the solution of a simple control problem under uncertainty, using the formal structures previously developed. We shall focus our attention on a DR-expectation with dynamic generator, however similar arguments can be used in each of the other settings considered. In some ways, this approach is similar to those considered by Bielecki, Chen and Cialenco \cite{Bielecki2016}, where the DR-expectation is replaced by an approximate confidence interval. (Taking $k'=\infty$ in our analysis would give a very similar problem to the one they consider.)

Suppose a controller selects a control $u$ from a set $U$, which we assume is a countable union of compact metrizable sets (this assumption is purely to enable us to use appropriate measurable selection theorems). Controls are required to be $\CY$-predictable (i.e. $u_t$ is $\CY_{t-1}$-measurable), and we write $\mathcal{U}$ for the space of such controls.

A control has an impact on the generator of $X,Y$, through modifying the penalty function $\gamma$, which describes the `reasonable' models for the transition matrix $A$ and the distribution of observations $c$. In particular, for a given $u$ we will now have a penalty $\gamma_t(\fA_t;u_t)$, which we assume is continuous in $u_t$ for every $\fA_t\in\bA$. This allows a controller to modify what are `reasonable' values of the generator, even though the generator may not be fully known. We write $\oCE_u$ for the corresponding $\CY$-consistent expectation and $\kappa^u$ for the corresponding dynamic penalty. The expectation $\oCE_u$ then encodes both the change in the hidden dynamics in the future due to future controls and the change in our agent's understanding of the present hidden state (as represented via $\kappa^u$) due to her past controls.

The controller wishes to minimize an expected  cost
\[\oCE_u\Big(\mathfrak{C}(X_T, \{Y_s\}_{s\leq T}) + \sum_{t< T} \mathfrak{L}_t(\{Y_s\}_{s< t}, u_{t+1})\Big).\]
Here $\mathfrak{C}$ is a terminal cost, which may depend on the hidden state $X_T$, and $\mathfrak{L}$ is a running cost, which will depend on the control $u_{t+1}$ used at time $t$. We assume $\mathfrak{C}$ and $\mathfrak{L}$ are continuous in $u$ (almost surely). We do not allow $\mathfrak{L}_t$ to depend on $X_t$, as this would potentially lead to paradoxes (as the agent could learn information about the hidden state by observing their running costs). We think of the cost $\mathfrak{L}_t$ as being paid at time $t$, depending on the choice of control $u_{t+1}$ (which will affect the generator at time $t+1$). For notational simplicity, we will omit to write $Y$ as an argument when unnecessary. 

For a given control process $u$, we define the remaining cost
\[J(\omega, t, u) = \oCE_u\Big(\mathfrak{C}(X_T) + \sum_{t\le s < T} \mathfrak{L}_s(u_{s+1})\Big|\CY_t\Big)\]
and hence the value function 
\[V(\omega, t) = \inf_{u\in \mathcal{U}}\oCE_u\Big(\mathfrak{C}(X_T) + \sum_{t\le s < T} \mathfrak{L}_s(u_{s+1})\Big|\CY_t\Big).\]

\begin{remark}\label{rem:strotz}
We define our expected cost using the $\CY$-consistent expectation $\oCE_u$, rather than the (inconsistent) DR-expectation $\CE_{\mathcal{Q}|\mathbf{y}_t}$), as this leads to time-consistency in the choice of controls.
\end{remark}

\begin{remark}
 We can see that the calculation of the value function is a `minimax' problem, in that $V$ minimizes the cost, which we evaluate using a maximum over a set of models. However, given the potential for learning, the requirement for time consistency, and the uncertainties involved, it is not clear that one can write $V$ explicitly in terms of a single minimization and maximization of a given function.
\end{remark}

\begin{remark}As the filter-state penalty $\kappa$ is a general function depending on the control, and $Y$ only takes finitely many states, it is not generally possible to express the effect (on $\kappa$) of a control through a change of measure relative to some reference dynamics. In particular, we face the problem that controls $u_s$ for times $s<T$ will have an impact on the terminal cost $V_T= \oCE_u(\mathfrak{C}(X_T)|\CY_T)$, through their impact on the uncertainty $\kappa^u_T$, so, unlike in a traditional control problem, $V_T$ is not independent of $u$ given $\CY_T$. For this reason, even though we model the impact of a control through its effect on the generator, we cannot give a fully `weak' formulation of our control problem, and are restricted to a `Markovian' setting, where we shall exploit dynamic programming. 
\end{remark}
\begin{theorem}
The value function satisfies a dynamic programming principle, in particular, if an optimal control $u^*$ exists, then for every $t\le T$,
\begin{align*}
 V_{t-1} &= \oCE_{u^*}(V_t|\CY_{t-1}) +\mathfrak{L}_{t-1}(u_{t}^*)
\end{align*}
(and similarly if we only assume an $\epsilon$-optimal control exists for every $\epsilon>0$).
\end{theorem}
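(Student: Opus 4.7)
The plan is to prove both inequalities $V_{t-1} \le \oCE_{u^*}(V_t|\CY_{t-1}) + \mathfrak{L}_{t-1}(u_t^*)$ and $V_{t-1} \ge \oCE_{u^*}(V_t|\CY_{t-1}) + \mathfrak{L}_{t-1}(u_t^*)$, exploiting three structural properties of $\oCE_u$: translation equivariance (to pull out the $\CY_{t-1}$-measurable running cost, since $u_t$ is $\CY_{t-1}$-predictable), $\CY$-consistency (to split the expectation at time $t$), and strict monotonicity. A crucial preliminary observation is that the one-step expectation from $\CY_t$ to $\CY_{t-1}$ depends only on $(u_1,\ldots,u_t)$, because $\kappa_{t-1}^u$ is built from $u_1,\ldots,u_{t-1}$ and $\gamma_t(\,\cdot\,;u_t)$ enters at time $t$; choices $u_{t+1},\ldots,u_T$ only influence the inner expectation $\oCE_u(\,\cdot\,|\CY_t)$. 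This decoupling is what enables the pasting argument below.

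For the inequality $V_{t-1}\ge \oCE_{u^*}(V_t|\CY_{t-1})+\mathfrak{L}_{t-1}(u_t^*)$, I would first apply translation equivariance and $\CY$-consistency to rewrite, for any admissible $u$,
\[
\oCE_u\Big(\mathfrak{C}(X_T)+\sum_{t-1\le s<T}\mathfrak{L}_s(u_{s+1})\Big|\CY_{t-1}\Big) = \mathfrak{L}_{t-1}(u_t)+\oCE_u\big(J(\cdot,t,u)\big|\CY_{t-1}\big).
\]
Since $J(\cdot,t,u)\ge V_t$ by definition of the value function, strict monotonicity of $\oCE_u(\,\cdot\,|\CY_{t-1})$ yields $\oCE_u(J(\cdot,t,u)|\CY_{t-1})\ge \oCE_u(V_t|\CY_{t-1})$. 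Taking the infimum over $u$ and noting that the outer one-step expectation only uses $u_t$ (and past, fixed controls), I obtain $V_{t-1}\ge \inf_{u_t}\{\mathfrak{L}_{t-1}(u_t)+\oCE_u(V_t|\CY_{t-1})\}$, which at the minimizer $u_t^*$ gives the desired bound.

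For the reverse inequality, the idea is to paste an optimizer for $V_t$ onto the outer choice. Assuming $u^*$ is optimal for $V_{t-1}$, let $\tilde u$ be an optimal control for $V_t$ starting at time $t+1$ (depending measurably on $\omega$ through $\CY_t$), and consider the concatenated control $u'$ that agrees with $u^*$ on $\{1,\ldots,t\}$ and with $\tilde u$ on $\{t+1,\ldots,T\}$. Because the outer one-step expectation only depends on $(u_1^*,\ldots,u_t^*)$, we have
\[
\oCE_{u'}\big(J(\cdot,t,u')\big|\CY_{t-1}\big) = \oCE_{u^*}\big(V_t\big|\CY_{t-1}\big),
\]
and hence $V_{t-1}\le \mathfrak{L}_{t-1}(u_t^*)+\oCE_{u^*}(V_t|\CY_{t-1})$; combined with the reverse inequality, equality follows. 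For the $\epsilon$-optimal variant, $\tilde u$ is chosen $\epsilon$-optimally for each $\omega$, with $\epsilon\to 0$, and the same pasting gives $V_{t-1}\le \mathfrak{L}_{t-1}(u_t^*)+\oCE_{u^*}(V_t|\CY_{t-1})+\epsilon\cdot\oCE_{u^*}(1|\CY_{t-1})$ by translation equivariance, from which the limit yields the bound.

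The main obstacle is producing the spliced control $u'$ as an admissible ($\CY$-predictable) element of $\mathcal{U}$: one needs a measurable selection of an ($\epsilon$-)optimal $\tilde u$ as a function of $\omega\in\CY_t$. This is where the assumptions that $U$ is a countable union of compact metrizable sets and that $\mathfrak{C},\mathfrak{L},\gamma$ depend continuously on $u$ are used, via a Filippov-type measurable selection theorem (as already invoked for the well-posedness of the DR-expectation). A secondary subtlety is checking that $\kappa^{u'}$ coincides with $\kappa^{u^*}$ up to time $t$, so that the outer expectation is genuinely unaffected by the splice — this follows directly from the recursion for $\kappa_t^u$ in Theorem \ref{thm:dynamicDR} and the fact that the splice modifies only $u_{t+1},\ldots,u_T$.
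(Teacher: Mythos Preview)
Your proposal is correct and follows essentially the same route as the paper: both derive the one-step recursion $J(\omega,t-1,u)=\mathfrak{L}_{t-1}(u_t)+\oCE_u(J(\omega,t,u)|\CY_{t-1})$ from translation equivariance and $\CY$-consistency, then invoke a pasting argument together with measurable selection. The paper simply cites this last step as a ``standard pasting argument'', whereas you spell out the two inequalities, the decoupling of the one-step map from $u_{t+1},\ldots,u_T$, and the consistency of $\kappa^{u'}$ with $\kappa^{u^*}$ up to time $t$; one small clean-up is that your $\ge$ direction is more directly obtained by evaluating at $u=u^*$ (using $V_{t-1}=J(\cdot,t-1,u^*)$) rather than first passing to $\inf_{u_t}$.
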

\begin{proof}
For any control $u$, using the recursivity of $\oCE$ we have
\[\begin{split}
J(\omega,t-1,u) &= \oCE_u\Big(\mathfrak{C}(X_T) + \sum_{t-1\le s \le T} \mathfrak{L}_s(u_{s+1})\Big|\CY_{t-1}\Big)\\
&=\oCE_u\Big(\mathfrak{C}(X_T) + \sum_{t\le s \le T} \mathfrak{L}_s(u_{s+1})\Big|\CY_{t-1}\Big)+ \mathfrak{L}_{t-1}(u_{t})\\
&=\oCE_u\big(J(\omega, t, u)\big|\CY_{t-1}\big)+ \mathfrak{L}_{t-1}(u_t).
\end{split}\]
The stated equality (which is the dynamic programming relation) then follows from a standard pasting argument (along with a measurable selection result to ensure optimal or $\epsilon$-optimal controls exist, see for example \cite[Appendix 10]{Cohen2015}).
\end{proof}

\begin{theorem}
The value function of the control problem satisfies the recursion
\[\begin{split}
&V_{t-1} = \Xi_{t-1}(\kappa_{t-1}, \{Y_s\}_{s\le t-1})\\
&=\inf_{u \in U} \bigg\{\mathfrak{L}_{t}(u)+\sup_{p\in S_N^+, \fA\in \bA}\Big\{\int_{\bR^d} \Xi_t\big(\kappa^u_t(\cdot|y, \kappa_{t-1}), y,  \{Y_s\}_{s\le t-1}\big) c_\fA(dy; A^\fA p_{t-1})\\
&\qquad\qquad\qquad\qquad-\big(k^{-1}\kappa_{t-1}(p)\big)^{k'}\Big\}\bigg\}
\end{split}\]
where 
\[\begin{split}\kappa_t^u(\cdot|y, \kappa_{t-1}) = \inf_{\fA_t}\bigg\{\inf_{p_{t-1}\in (\cdot)^{\ofA_t}}\Big\{&\kappa_{t-1}(p_{t-1}) + \gamma_t(\fA_t; \{Y_s\}_{s<t}, u_t)\\ &-\log\Big(c_\fA (y; A^\fA_t p_{t-1}, u_t)\Big)\Big\}\bigg\}-m_t\end{split}\]
for $m_t$ a normalizing constant (which may depend on $u_t$) to ensure $\kappa_t^u$ has minimal value zero, and terminal value
\[\Xi_{T}(\kappa_{T}, \{Y_s\}_{s\le T}) := \inf_{p\in S_N^+} \Big\{\sum_i \mathfrak{C}(e_i, \{Y_s\}_{s\le T}) p_i - \big(k^{-1} \kappa_T(p)\big)^{k'}\Big\}.\]
\end{theorem}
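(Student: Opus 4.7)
The plan is to combine the dynamic programming principle (DPP) established in the previous theorem with the one-step formula for the asynchronous $\overleftarrow{\mathcal{E}}$ derived earlier, and then verify a Markovian structure in which $\kappa^u_t$ plays the role of a state variable. Concretely, I will prove the identity $V_t(\omega) = \Xi_t(\kappa^u_t(\omega,\cdot), \{Y_s(\omega)\}_{s\le t})$ by backward induction on $t$, with the recursion falling out by substituting the update rule from Theorem \ref{thm:dynamicDR} into the one-step expectation at the inductive step.

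First I would handle the base case $t=T$. Since there are no more controls to choose after $T$, $V_T = \oCE_u\big(\mathfrak{C}(X_T,\{Y_s\}_{s\le T})\big|\CY_T\big)$, which by definition of the dynamically consistent extension equals $\CE_{\mathcal{Q}|\mathbf{y}_T}(\mathfrak{C}(X_T,\{Y_s\}_{s\le T}))$ applied to a $\sigma(X_T)\otimes \CY_T$-measurable random variable. Reading \eqref{eq:recursivePen} with this random variable shows that $V_T$ is the optimum (over $p\in S_N^+$) of $\sum_i\mathfrak{C}(e_i,\{Y_s\}_{s\le T})p_i -(k^{-1}\kappa^u_T(p))^{k'}$, and hence depends on $\omega$ only through $\kappa^u_T$ and the observation path. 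This identifies $\Xi_T$ and in particular establishes the Markovian structure at the terminal time.

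For the inductive step, assume $V_t=\Xi_t(\kappa^u_t,\{Y_s\}_{s\le t})$. Combining the DPP from the previous theorem with the one-step asynchronous expectation lemma (applied to the $\CY_{t}$-measurable random variable $V_t$) gives
\[
V_{t-1} = \inf_{u\in U}\Bigl\{\mathfrak{L}_{t-1}(u) + \sup_{p,\fA}\Bigl\{\int_{\bR^d}\Xi_t\bigl(\kappa^u_t(\cdot\mid y,\kappa_{t-1}),y,\{Y_s\}_{s\le t-1}\bigr)\,c_\fA(dy;A^\fA p)-\bigl(k^{-1}\kappa_{t-1}(p)\bigr)^{k'}\Bigr\}\Bigr\}.
\]
The key point is to identify the functional $\kappa^u_t(\cdot\mid y,\kappa_{t-1})$ that results from conditioning on the observation $Y_t=y$: this is read off directly from the recursion of Theorem \ref{thm:dynamicDR}, with $\gamma_t$ replaced by the control-dependent penalty $\gamma_t(\fA_t;\{Y_s\}_{s<t},u_t)$ and $c_\fA$ likewise depending on $u_t$. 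Since this is a deterministic functional of $\kappa_{t-1}$, $y$ and $u_t$, the right-hand side is again a functional of $\kappa_{t-1}$ and $\{Y_s\}_{s\le t-1}$ only, which identifies $\Xi_{t-1}$ and closes the induction.

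The main technical obstacle is the measurable-selection issue in the outer infimum over $u\in\mathcal{U}$: one must verify that the $\CY_{t-1}$-measurable infimum is actually attained (or approximately attained) by a predictable control process, so that the DPP identity can be composed with itself as $t$ moves. Here the assumptions in the control formulation (that $U$ is a countable union of compact metrizable sets and that $\mathfrak{L}$, $\mathfrak{C}$ and $\gamma_t(\fA_t;\cdot)$ are continuous in $u$), together with the continuity-in-parameters already invoked for the DR-expectation, let one apply a standard measurable-selection result (Filippov's theorem, as cited after Definition \ref{defn:DRexp}, or the appendix of \cite{Cohen2015} cited in the previous proof) to produce the required predictable $\epsilon$-optimal controls. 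A secondary check is that the supremum in the inner problem is achieved (or approximated) by a $\CY_{t-1}$-measurable pair $(p,\fA)$, which again follows by continuity in $(p,\fA)$ combined with the measurability of $\kappa_{t-1}$ in $\omega$.
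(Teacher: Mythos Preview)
Your proposal is correct and follows essentially the same route as the paper: backward induction starting from the terminal identification of $\Xi_T$ via \eqref{eq:recursivePen}, then at each step combining the dynamic programming principle with the one-step asynchronous expectation formula, reading off the controlled penalty update from Theorem~\ref{thm:dynamicDR}, and invoking measurable selection to handle the infimum over $u$. Your treatment is in fact slightly more explicit than the paper's about the measurable-selection step and the secondary measurability of the inner $(p,\fA)$-supremum.
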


\begin{proof}
We proceed by recursion, and assume that an optimal policy is always attainable (this can be relaxed, with an increase of notational complexity). We suppose that the value function is given by $V_t=\Xi_t(\kappa_t^{u^*}, \{Y_s\}_{s\le t})$, for some function $\Xi_t:\mathfrak{K}\times \bR^t\to \bR$, where $\mathfrak{K}$ denotes the space of functions from $S_N^+$ to $\bR$ with minimal value zero. From our assumptions, this is clearly true at time $T$, and $\Xi$ has the stated form.
 
 From the perspective of time $T-1$, suppose the current penalty state $\kappa_{T-1}$ is given. Then, for any proposed control $u_T$ (which is assumed to be $\CY_{T-1}$-measurable), when $Y_T$ is observed at time $T$, the time-$T$ penalty will be given by $\kappa_T^u(\cdot|Y_T, \kappa_{T-1})$. For each choice of $\kappa_{T-1}$ and each conditionally Markov measure $\bQ$ (with generator $\fA$), the remaining cost faced at time $T$ will be 
\[\Xi_T\big(\kappa^u_T(\cdot|Y_T, \kappa_{T-1}), Y_T, \{Y_s\}_{s\le T-1}\big).\] 
The conditional expectation of $\Xi_T$ can be written 
 \[\begin{split}
 \bE_\bQ[\Xi_T(\kappa^u_T)|\CY_{T-1}] &= \int_{\bR^d} \Xi_T(\kappa^u_T(\cdot|y, \kappa_{T-1}), y, \{Y_s\}_{s\le T-1}) c_\fA(dy; A^\fA p_{T-1}).
 \end{split}\]
  
 The DR-expected cost of using control $u_T$, as seen from time $T-1$, fixing the state of the penalty at $T-1$, is then given by taking a conditional expectation and adding the running cost term, which yields
 \[\begin{split}&
 \mathfrak{L}_{T-1}(u_T) + \oCE_u\big(\Xi(\kappa^u_T, \{Y_s\}_{s\le T})\big|\CY_{T-1}\big)\\
&= \mathfrak{L}_{T-1}(u_T) \\
&\qquad+\sup_{p\in S_N^+, \fA\in \bA}\Big\{ \int_{\bR^d} \Xi_T\big(\kappa^u_T(\cdot|y, \kappa_{T-1})\big) c_\fA(dy; A^\fA_{T-1} p)-\big(k^{-1}\kappa_{T-1}(p)\big)\Big\}.
 \end{split}\]
Finally, optimizing this cost gives the one-step minimal cost in terms of a function of $\{Y_s\}_{s\le T-1}$ and $\kappa_{T-1}$, as required. (To be rigorous, this can be done using a measurable selection argument, as in \cite[Appendix 10]{Cohen2015}.) The result follows by recursion, as our problem is known to satisfy the dynamic programming principle.
\end{proof}

\begin{corollary}
A control is optimal if and only if it achieves the infimum in the formula for $V_t$ above. 
\end{corollary}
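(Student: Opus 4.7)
The plan is to treat the corollary as a verification/converse statement for the dynamic programming recursion already established, and to prove both directions by backward induction on $t$, using the representation $V_t = \Xi_t(\kappa_t^{u^*}, \{Y_s\}_{s\le t})$ and the tower property $J(\omega, t-1, u) = \oCE_u\big(J(\omega, t, u)\big|\CY_{t-1}\big) + \mathfrak{L}_{t-1}(u_t)$ derived in the proof of the dynamic programming principle above. The base case $t=T$ is immediate, since $V_T = \Xi_T(\kappa_T^{u^*}, \{Y_s\}_{s\le T}) = J(\omega, T, u^*)$ for any $u^*$ attaining the infimum in the formula for $\Xi_T$.

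For the ``if'' direction, suppose $u^*$ attains the infimum in the Bellman recursion for $V_{t-1}$ for every $t$. Assuming inductively that $V_t = J(\omega, t, u^*)$, I would substitute this identity into the Bellman recursion and use the recursivity of $\oCE_{u^*}$ to obtain
\[V_{t-1} = \mathfrak{L}_{t-1}(u_t^*) + \oCE_{u^*}\big(V_t \big| \CY_{t-1}\big) = \mathfrak{L}_{t-1}(u_t^*) + \oCE_{u^*}\big(J(\omega, t, u^*) \big| \CY_{t-1}\big) = J(\omega, t-1, u^*),\]
which closes the induction and at $t=0$ gives $V_0 = J(\omega, 0, u^*)$, i.e. optimality of $u^*$.

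For the ``only if'' direction, suppose $u^*$ is optimal, so that $V_t = J(\omega, t, u^*)$ for every $t$ (which follows from the dynamic programming principle, by the standard argument that the $\oCE_{u^*}$-expected value of the running optimality gap $J(\omega, t, u^*) - V_t$ is nonnegative, nondecreasing in $t$, and vanishes at $t=T$ and $t=0$). Comparing the dynamic programming identity $V_{t-1} = \oCE_{u^*}(V_t | \CY_{t-1}) + \mathfrak{L}_{t-1}(u_t^*)$ with the infimum formula for $\Xi_{t-1}$, strict monotonicity of $\oCE$ together with the fact that $u_t^*$ is admissible forces the choice $u = u_t^*$, $p = p_t^{u^*}$, $\fA = \fA_t^{u^*}$ to attain the infimum in the Bellman recursion almost surely.

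The main obstacle is the measurable-selection / pasting step in the ``only if'' direction: the infimum and supremum in the Bellman recursion are over $u \in U$ and $(p,\fA) \in S_N^+ \times \bA$, and one must verify that the candidate optimizer inside $\oCE_{u^*}(V_t|\CY_{t-1})$ agrees with the pair realizing the supremum in the definition of $\oCE_{u^*}$ -- this is exactly where one invokes the measurable-selection apparatus cited in the previous theorem (the reference to \cite[Appendix 10]{Cohen2015}) and the continuity of $\mathfrak{L}$, $\mathfrak{C}$, $\gamma$ in $u$. Otherwise the argument is a fairly routine verification theorem, and the ``$\epsilon$-optimal'' case proceeds identically with an extra $\epsilon$ absorbed into the estimates.
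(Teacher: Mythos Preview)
The paper states this corollary without proof, treating it as immediate from the dynamic programming principle and the recursion for $\Xi_{t-1}$ just established. Your proposal is a correct and standard verification/converse argument that unpacks why the corollary follows; it is strictly more detailed than what the paper provides.

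Two minor points of cleanup. First, your base case is slightly misphrased: the formula for $\Xi_T$ contains no infimum over controls (only the $\sup$---which the paper writes as $\inf$, apparently a typo---over $p\in S_N^+$), so at $t=T$ there is nothing for $u^*$ to attain; rather $V_T=\Xi_T(\kappa_T)$ holds identically once the state $\kappa_T$ is fixed. Second, in the ``only if'' direction you write that strict monotonicity forces ``$u=u_t^*$, $p=p_t^{u^*}$, $\fA=\fA_t^{u^*}$'' to attain the infimum, but the Bellman formula has the infimum over $u$ alone (with $p,\fA$ inside an inner supremum), so the correct conclusion is simply that $u_t^*$ attains the outer infimum; the inner $\sup$ is already the definition of $\oCE_{u^*}(\cdot|\CY_{t-1})$ and needs no separate attainment argument. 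With these tweaks, your argument is the routine verification theorem the paper is implicitly invoking.
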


\begin{remark}
If we assume that the terminal cost depends only on $X_T$ (and not on $Y$), and the running cost does not depend on $Y$, then one can observe a Markov property to the control problem, that is, $V_s$ is conditionally independent of $Y$ given $\kappa_s$. The corresponding optimal controls can then also be taken only to depend on $\kappa_s$.
\end{remark}

\begin{remark}
We can write (using $\partial$ for the partial difference operator and omitting dependence on $Y$)
\[\partial_t \Xi_t (\kappa) = \Xi_t(\kappa) - \Xi_{t-1}(\kappa), \qquad \partial_{\kappa}\Xi_t(\kappa') = \Xi_{t}(\kappa+\kappa')-\Xi_{t}(\kappa)\]
Rearranging, we obtain the following Bellman--Isaacs-type equation for $\Xi$
\[\begin{split}
 \partial_t \Xi_t(\kappa)=-\inf_{u \in U} \sup_{p\in S_N^+, \fA\in \bA}\Big\{&\int_{\bR^d}   \partial_{\kappa}\Xi_t\big(\kappa^u_t(\cdot|y, \kappa)-\kappa\big)c_\fA(dy; A^\fA p)\\
&+\mathfrak{L}(u_T) -\big(k^{-1}\kappa_{T-1}(p)\big)^{k'}\Big\},
\end{split}\]
with terminal value $\Xi_T(\kappa) = \inf_{p\in S_N^+}\big\{\sum_i\mathfrak{C}(e_i)p_i-\big(k^{-1}\kappa(p)\big)^{k'}\big\}.$
\end{remark}

\bibliographystyle{plain}
\bibliography{PaperBib}

\begin{thebibliography}{10}

\bibitem{Allan2017}
Andrew Allan and Samuel~N. Cohen.
\newblock Parameter uncertainty in the {K}alman--{B}ucy filter.
\newblock to appear.

\bibitem{Basar1991}
Tamer Ba\c{s}ar and Pierre Bernhard.
\newblock {\em $H^\infty$-Optimal Control and Related Minimax Design Problems,
  A dynamic game approach}.
\newblock Birkh\"auser, 1991.

\bibitem{Bain2009}
Alan Bain and Dan Crisan.
\newblock {\em Fundamentals of Stochastic Filtering}.
\newblock Springer, 2009.

\bibitem{Bielecki2016}
Tomasz~R. Bielecki, Tao Chen, and Igor Cialenco.
\newblock Recursive construction of confidence regions.
\newblock https://arxiv.org/abs/1605.08010.

\bibitem{Boel2002}
Rene~K. Boel, Matthew~R. James, and Ian~R. Petersen.
\newblock Robustness and risk-sensitive filtering.
\newblock {\em IEEE Transactions on Automatic Control}, 47(3):451--461, 2002.

\bibitem{Cohen2016}
Samuel~N. Cohen.
\newblock Data-driven nonlinear expectations for statistical uncertainty in
  decisions.
\newblock {\em Electronic Journal of Statistics}, 11(1):1858--1889, 2017.

\bibitem{Cohen2008c}
Samuel~N. Cohen and Robert~J. Elliott.
\newblock A general theory of finite state backward stochastic difference
  equations.
\newblock {\em Stochastic Processes and their Applications}, 120(4):442--466,
  2010.

\bibitem{Cohen2009a}
Samuel~N. Cohen and Robert~J. Elliott.
\newblock Backward stochastic difference equations and nearly-time-consistent
  nonlinear expectations.
\newblock {\em SIAM Journal on Control \& Optimization}, 49(1):125--139, 2011.

\bibitem{Cohen2015}
Samuel~N. Cohen and Robert~J. Elliott.
\newblock {\em Stochastic Calculus and Applications}.
\newblock Birkh\"auser, 2nd ed. edition, 2015.

\bibitem{Delbaen2008}
Freddy Delbaen, Shige Peng, and Emanuela Rosazza~Gianin.
\newblock Representation of the penalty term of dynamic concave utilities.
\newblock {\em Finance and Stochastics}, 14(3):449--472, 2010.

\bibitem{Dey1995}
Subhrakanti Dey and John~B. Moore.
\newblock Risk-sensitive filtering and smoothing for hidden {M}arkov models.
\newblock {\em Systems and Control Leters}, 25:361--366, 1995.

\bibitem{Duffie1992}
Darrell Duffie and Larry~G. Epstein.
\newblock Asset pricing with stochastic differential utility.
\newblock {\em The Review of Financial Studies}, 5(3):411--436, 1992.

\bibitem{El1997}
Nicole El~Karoui, Shige Peng, and M.C. Quenez.
\newblock Backward stochastic differential equations in finance.
\newblock {\em Mathematical Finance}, 7(1):1--71, January 1997.

\bibitem{Epstein2003}
Larry~G. Epstein and Martin Schneider.
\newblock Recursive multiple-priors.
\newblock {\em Journal of Economic Theory}, 113:1--31, 2003.

\bibitem{Fagin90}
Ronald Fagin and Joseph Halpern.
\newblock A new approach to updating beliefs.
\newblock In {\em Proceedings of the Sixth Conference Annual Conference on
  Uncertainty in Artificial Intelligence (UAI-90)}, pages 317--325, Corvallis,
  Oregon, 1990. AUAI Press.

\bibitem{Follmer2002a}
Hans F\"ollmer and Alexander Schied.
\newblock Convex measures of risk and trading constraints.
\newblock {\em Finance and Stochastics}, 6:429--447, 2002.

\bibitem{Follmer2002}
Hans F\"ollmer and Alexander Schied.
\newblock {\em Stochastic Finance: An introduction in discrete time}.
\newblock Studies in Mathematics 27. de Gruyter, Berlin-New York, 2002.

\bibitem{Frittelli2002}
Marco Frittelli and Emanuela Rosazza~Gianin.
\newblock Putting order in risk measures.
\newblock {\em Journal of Banking \& Finance}, 26(7):1473--1486, 2002.

\bibitem{Graf1980}
Siegfried Graf.
\newblock A {R}adon--{N}ikodym theorem for capacities.
\newblock {\em Journal f\"ur die reine und angewandte Mathematik},
  320:192--214, 1980.

\bibitem{Grimble1990}
Michael~J. Grimble and Ahmed El~Sayed.
\newblock Solution of the ${H}_\infty$ optimal linear filtering problem for
  discrete-time systems.
\newblock {\em IEEE Transactions on Acoustics Speech and Signal Processing},
  38(7), 1990.

\bibitem{Hansen2005}
Lars~Peter Hansen and Thomas~J. Sargent.
\newblock Robust estimation and control under commitment.
\newblock {\em Journal of Economic Theory}, 124:258--301, 2005.

\bibitem{Hansen2007}
Lars~Peter Hansen and Thomas~J. Sargent.
\newblock Recursive robust estimation and control without commitment.
\newblock {\em Journal of Economic Theory}, 136(1):1--27, 2007.

\bibitem{Hansen2008}
Lars~Peter Hansen and Thomas~J. Sargent.
\newblock {\em Robustness}.
\newblock Princeton University Press, 2008.

\bibitem{Huber2009}
Peter~J. Huber and Elvezio~M. Roncetti.
\newblock {\em Robust Statistics}.
\newblock Wiley, 2nd edition, 2009.

\bibitem{James1994}
Matthew~R. James, John~S. Baras, and Robert~J. Elliott.
\newblock Risk-sensitive control and dynamic games for partially observed
  discrete-time nonlinear systems.
\newblock {\em IEEE Transactions on Automatic Control}, 39(4), 1994.

\bibitem{52}
R.E. Kalman.
\newblock A new approach to linear filtering and prediction problems.
\newblock {\em J. Basic Eng. ASME}, 82:33--45, 1960.

\bibitem{53}
R.E. Kalman and R.S. Bucy.
\newblock New results in linear filtering and prediction theory.
\newblock {\em J. Basic Eng. ASME}, 83:95--108, 1961.

\bibitem{Keynes1921}
John~Maynard Keynes.
\newblock {\em A treatise on Probability}.
\newblock 1921.

\bibitem{Knight1921}
Frank Knight.
\newblock {\em Risk, Uncertainty and Profit}.
\newblock Houghton Mifflin, 1921.

\bibitem{Kupper2008}
Michael Kupper and Walter Schachermayer.
\newblock Representation results for law invariant time consistent functions.
\newblock {\em Mathematics and Financial Economics}, 2(3):189--210, September
  2009.

\bibitem{Peng2010}
Shige Peng.
\newblock Nonlinear expectations and stochastic calculus under uncertainty.
\newblock arxiv::1002.4546v1, 2010.

\bibitem{Riedel2004}
Frank Riedel.
\newblock Dynamic coherent risk measures.
\newblock {\em Stochastic Processes and their Applications}, 112(2):185--200,
  2004.

\bibitem{Rockafellar2006}
R.~Tyrrell Rockafellar, Stan Uryasev, and Michael Zabarankin.
\newblock Generalized deviations in risk analysis.
\newblock {\em Finance and Stochastics}, 10:51--74, 2006.

\bibitem{Handel2008}
Ramon van Handel.
\newblock Discrete time nonlinear filters with informative observations are
  stable.
\newblock {\em Electronic Communications in Probability}, 13:53, 2008.

\bibitem{Wald1945}
Abraham Wald.
\newblock Statistical decision functions which minimize the maximum risk.
\newblock {\em Annals of Mathematics}, 46(2):265--280, 1945.

\bibitem{Walley1991}
Peter Walley.
\newblock {\em Statistical Reasoning with Imprecise Probabilities}.
\newblock Chapman and Hall, 1991.

\bibitem{79}
W.N. Wonham.
\newblock Some applications of stochastic differential equations to optimal
  nonlinear filtering.
\newblock {\em SIAM J. Control}, 2:347--369, 1965.

\bibitem{Zhang2009}
Jinhui Zhang, Yuanqing Xia, and Peng Shi.
\newblock Parameter-dependent robust {$H_\infty$} filtering for uncertain
  discrete-time systems.
\newblock {\em Automatica}, 45:560--565, 2009.

\end{thebibliography}

\end{document}